\newtheorem{proposition}{Proposition}
\newtheorem{lemma}[proposition]{Lemma}
\newtheorem{theorem}[proposition]{Theorem}
\newtheorem{corollary}[proposition]{Corollary}
\theoremstyle{definition}
\newtheorem{definition}[proposition]{Definition}
\newtheorem{example}[proposition]{Example}
\newtheorem{remark}[proposition]{Remark}
\newcommand{\N}{\ensuremath{\mathbb{N}}}
 \newcommand{\R}{\ensuremath{\mathbb{R}}}
 \newcommand{\C}{\ensuremath{\mathbb{C}}}
\newcommand{\phid}{\ifmmode \Phi \text{ID} \else $\Phi$ID \fi}
\newcommand{\Bint}{\ensuremath{B^\circ}}
\newcommand{\Lint}{\ensuremath{L^\circ}}
\newcommand{\LintD}{\ensuremath{\frac{\partial \Lint}{\partial x}}}
\DeclareMathOperator{\content}{\mathcal{C}}
\DeclareMathOperator{\Dis}{Dis}
\DeclareMathOperator{\Hom}{Hom}
\DeclareMathOperator{\id}{id}
\begin{document}

\title{A Logarithmic Decomposition for Information}

%
%
%
 \author{
   \IEEEauthorblockN{Keenan J. A. Down\IEEEauthorrefmark{1}\IEEEauthorrefmark{2} and 
                     Pedro A. M. Mediano\IEEEauthorrefmark{3}}
   \IEEEauthorblockA{\IEEEauthorrefmark{1}%
                     Queen Mary, University of London, k.j.a.down@qmul.ac.uk}
   \IEEEauthorblockA{\IEEEauthorrefmark{2}%
                     University of Cambridge, kjad2@cam.ac.uk}
   \IEEEauthorblockA{\IEEEauthorrefmark{3}%
                     Imperial College London, p.mediano@imperial.ac.uk}
 }

\maketitle

\begin{abstract}
The Shannon entropy of a random variable $X$ has much behaviour analogous to a signed measure. Previous work has concretized this connection by defining a signed measure $\mu$ on an \textit{abstract information space} $\tilde{X}$, which is taken to represent the information that $X$ contains. This construction is sufficient to derive many measure-theoretical counterparts to information quantities such as the mutual information $I(X; Y) = \mu(\tilde{X} \cap \tilde{Y})$, the joint entropy $H(X,Y) = \mu(\tilde{X} \cup \tilde{Y})$, and the conditional entropy $H(X|Y) = \mu(\tilde{X}\, \setminus \, \tilde{Y})$. We demonstrate that there exists a much finer decomposition with intuitive properties which we call the \textit{logarithmic decomposition (LD)}. We show that this signed measure space has the useful property that its \textit{logarithmic atoms} are easily characterised with negative or positive entropy, while also being coherent with Yeung's $I$-measure \cite{yeung1991new}. We present the usability of our approach by re-examining the G\'acs-K\"orner common information from this new geometric perspective and characterising it in terms of our logarithmic atoms. We then highlight that our geometric refinement can account for an entire class of information quantities, which we call \textit{logarithmically decomposable} quantities. 
\end{abstract}

\section{Introduction}
\IEEEPARstart{F}{or} all first-order information-theoretical quantities derived from the classical Shannon entropy on a series of random variables $X_1,\ldots, X_r$, Yeung demonstrated that there exists a representative set in a corresponding $\sigma$-algebra $\mathcal{F}$ and, moreover, that for any set in the $\sigma$-algebra there is a sensible measure of information \cite{yeung1991new}. This correspondence, built on earlier work by Hu Kuo Ting in \cite{ting1962amount}, offers a firm foundation for the measure-theoretical perspective of entropy.

The $\sigma$-algebra $\mathcal{F}$ of Yeung is coarse in that it is generated by the unions, intersections, and complements of \textbf{abstract information spaces} $\tilde{X}_1,\ldots, \tilde{X}_r$. This symbolic connection, while mechanically useable and consistent, offers little insight into the constituent elements of these information spaces. Our geometric perspective on a refinement provides both a quantitative and qualitative foundation for this measure.

Decomposing these information spaces is of great interest across multiple domains. What kind information is transmitted across a network of neurons, and with what qualitative structure? How can we disentangle the complex interplay between confounding variables, such as gender and pay, or race and arrest rate? It is known that the mutual information cannot generally be encoded \cite{gacs1973common}, but can we develop alternative language to explain these interactions? In this work we describe these abstract information spaces in greater detail than, to the best of our knowledge, has previously been seen. Given a series of random variables $X_1, \ldots, X_r$ we present a theoretically maximal refinement of the corresponding $\sigma$-algebra. We will construct a measure $\Lint$ on this abstract information space, and this measure shall represent the informational content of its subsets. In doing so we also decompose the $\sigma$-algebra of Yeung \cite{yeung1991new} into \textit{logarithmic} atoms, whose contribution to the entropy is particularly easy to characterise, in a process we call \textbf{logarithmic decomposition}.

We will make the utility of our new vocabulary clear by also expressing the common information of G\'acs and K\"orner \cite{gacs1973common} in terms of our logarithmic decomposition. We will see that both mutual information and common information reside in a class of information quantities we call \textbf{logarithmically decomposable} quantities, which we believe to contain many standard measures.

\section{Refinement of Abstract Information Spaces}
\label{InformationSpaces}
Let $X_1,\ldots, X_r$ be discrete random variables on a corresponding finite sample space $\Omega$ with the natural $\sigma$-algebra $\mathcal{F}$ generated by all possible combinations of outcomes on each variable. Using the probability space $(\Omega, \mathcal{F}, P)$, we will define a corresponding space for information.

\begin{definition}
Let $(\Omega, \mathcal{F}, P)$ be a probability space as above. Then define the \textbf{complex} of $\Omega$ as the simplicial complex on all outcomes $\omega \in \Omega$:

\begin{equation}
\Delta(\Omega) = \bigcup_{k = 1}^N \Omega_k \cong \mathcal{P}(\Omega) \setminus \{\varnothing\}.
\end{equation}

where $\Omega_k$ is the set of subsets $S \subseteq \Omega$ with $|S| = k$ and $N = |\Omega|$. For a collection of $n$ outcomes $\omega_1,\ldots, \omega_n$, we label the corresponding simplex for $n\geq 2$ as $\Bint(\omega_1,\ldots, \omega_n) \in \Omega^n$, which, viewing $\Delta(\Omega)$ geometrically, corresponds to a face, volume, or edge without its boundaries. For additional consistency, we let $B(\varnothing) = \Bint(\varnothing) = \emptyset$. Note that we write $\Bint$ to signify that the boundaries of the simplex are not included.

We occasionally write $\Delta\Omega$ instead of $\Delta(\Omega)$ to simplify the notation.
\end{definition}

\begin{example}
Consider a space of outcomes $\Omega = \{1, 2, 3, 4\}$. The complex consists of the following elements

\begin{align}
\begin{split}
\Delta(\Omega) = \{& B(1), B(2), B(3), B(4), \\
& \Bint(1,2), \Bint(1,3), \Bint(1,4),\\
& \Bint(2,3), \Bint(2,4), \Bint(3,4), \\
& \Bint(1,2,3), \Bint(1,2, 4), \Bint(1, 3, 4), \Bint(2, 3, 4),  \\
& \Bint(1, 2, 3, 4)\}
\end{split}
\end{align}

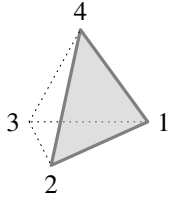
\begin{figure}[h]
\centering
\begin{tikzpicture}[line join = round, line cap = round]

\coordinate [label=above:4] (4) at (0,{sqrt(2)},0);
\coordinate [label=left:3] (3) at ({-.5*sqrt(3)},0,-.5);
\coordinate [label=below:2] (2) at (0,0,1);
\coordinate [label=right:1] (1) at ({.4*sqrt(3)},0,-.5);

\begin{scope}[decoration={markings,mark=at position 0.5 with {\arrow{to}}}]
\draw[dotted] (1)--(3);
\draw[very thick, gray, fill=lightgray,fill opacity=.5] (2)--(1)--(4)--cycle;
\draw[dotted] (3)--(2);
\draw[dotted] (3)--(4);
\end{scope}

\end{tikzpicture}

\caption{The highlighted triangle along with its boundary corresponds to the subset $B(1,2,4) = \{B(1), B(2), B(4), \Bint(1, 2), \Bint(1, 4), \Bint(2, 4), \Bint(1, 2, 4)\}$.}
\end{figure}

\end{example}

\begin{remark}
The points of the simplex (of the form $B(\omega)$) are relatively inconsequential and we will make use of them only when advantageous, but otherwise we will not mention them further. We will see in section \ref{MeasureSpace} that the points will be associated with zero entropy in our constructed measure space.
\end{remark}

\section{Construction of a signed measure}
\label{MeasureSpace}
Having endowed $\Delta(\Omega)$ with a geometric interpretation, we would like to equip it now with a signed measure. Doing this completes the construction of the \textbf{abstract information space} $\tilde{X}$ of a random variable $X$, which has appeared previously in the form of a set-variable correspondence \cite{yeung1991new, ting1962amount}. We will use the entropy loss to assign an entropy to each atom in our decomposition \cite{baez2011characterization}.

We will define here two measures of entropy loss: The \textbf{total entropy loss} $L$, which will represent the total entropy lost when merging a select group of regions in a partition; and the \textbf{interior loss} $\Lint$, which we will later use to measure the elements of our geometric space $\Delta(\Omega)$. We will see that $L$ measures entire simplices with their boundaries, and $\Lint$ will push the construction further to measure the interiors of these simplices. The purpose of this distinction is that the $L$ measure alone offers enough behaviour to deduce the measure of Campbell \cite{campbell1965entropy}, but becomes insufficient to study non-orthogonal partitions of $\Omega$. The interior loss measure $\Lint$ will refine this measure to resolve this orthogonality issue.

From the perspective of entropy loss, a variable will \textit{lose} entropy when boundaries between events are deleted \cite{baez2011characterization}. To see this, let $X$ be a random variable corresponding to a partition $\bm{Q}_X = \{Q_1, \ldots, Q_t\}$ of the outcome space $\Omega$ where $\mathbb{P}(Q_i) = \sum_{\omega \in Q_i} \mathbb{P}(\omega)$. If we create a new random variable $X'$ by merging two of the partitions $Q_1$ and $Q_2$ so that $\bm{Q}_{X'} = \{Q_1 \cup Q_2, Q_3, \ldots, Q_t\}$ becomes the new partition, then the new variable $X'$ will have a reduced entropy. Clearly, removing all boundaries will correspond to an entropy loss equivalent to the total entropy of the variable $X$.

\begin{definition}
\label{DEFINITIONTotalLoss}
Let $X$ be a random variable with corresponding partition $\bm{Q}_X = \{Q_1, \ldots, Q_t\}$, and let $X'$ be the random variable with corresponding partition 

\begin{equation}
\bm{Q}_{X'} =  \left\{\bigcup_{\alpha \in A} Q_\alpha \right\} \cup \left\{ Q_\beta : \beta \notin A \right\},
\end{equation}

where $A$ is a subset of events which we intend to merge. What we recover is $\bm{Q}_X$ with all parts indexed in $A$ merged together. Then we define the \textbf{total loss}

\begin{equation}
\label{EqnEntropyChange}
L(A) = H(X) - H(X').
\end{equation}

We may simplify the situation by writing $L(p_1,\ldots, p_n)$ instead of  $L(P_1,\ldots, P_n)$ or writing $L(S)$ for $S = \{p_1,\ldots, p_n\}$ to signify that $L$ can also be viewed as a function on $\R^n$. Expanding the above expression we find

\begin{align}
\begin{split}
L(p_1,\ldots, p_n) =&\, H(X) - H(X') \\
= &\, p_1 \log \left(\frac{1}{p_1}\right) + \cdots + p_n \log \left(\frac{1}{p_n}\right) \\
 &\, - (p_1 + \cdots + p_n) \log \left( \frac{1}{p_1 + \cdots + p_n} \right) \\
= &\, \log \left[ \frac{(p_1 + \cdots + p_n)^{(p_1 + \cdots + p_n)}}{p_1^{p_1} \ldots p_n^{p_n}} \right] .
\end{split}
\end{align}

\end{definition}

We note that for any partition $P_1,\ldots, P_n$ of $\Omega$ we must have that $L(P_1,\ldots, P_n)\geq 0$. Moreover, using equation \eqref{EqnEntropyChange}, it is immediately clear that for a random variable $X$ with outcomes of associated probabilities $p_1,\ldots, p_n$ with $\sum p_i = 1$, we have

\begin{equation}
\label{EqnEntropyEqualToTotalLoss}
H(X) = L(p_1,\ldots, p_n)
\end{equation}

Trivially we also see that $L(p) = 0$ for any $p\in [0, 1]$.

We now extend the definition of the total loss using M\"obius inversion to produce the \textbf{interior loss}, $\Lint$. For maximum strength, we will now treat $\Omega$ as a partition of singletons $\omega_i \in \Omega$, as we will see this is sufficient to describe all other possible partitions.

\begin{definition}
We will define the \textbf{interior loss} function $\Lint(\omega_1,\ldots, \omega_n)$ recursively. For $n=1$ let $\Lint(\omega) = 0$. For $n\geq 1$ we then recursively define $\Lint$ by

\begin{equation}
\Lint(\omega_1,\ldots, \omega_n) = L(\omega_1,\ldots, \omega_n) - \sum_{\substack{S \subset \{\omega_1,\ldots, \omega_n\} \\ |S| \leq n-1}} \Lint(S).
\end{equation}

Again, as with the total loss, we will often abuse this notation and write $\Lint(p_1,\ldots, p_n)$ where the probabilities reflect individual outcomes or regions in the partition.
\end{definition}

Geometrically, we will see that the interior loss will measure entropies in interior regions of the complex $\Delta(\Omega)$.

\begin{remark}
\label{InclusionExclusionIdentities}
The total loss can be expressed as a sum of the interior losses by virtue of their construction:
\begin{equation}
L(\omega_1,\ldots, \omega_n) = \sum_{S \subseteq \{\omega_1,\ldots, \omega_n\}} \Lint(S),
\end{equation}

and hence the interior loss function can also be expressed in terms of the loss function by virtue of the inclusion-exclusion principle:

\begin{equation}
\label{InclusionExclusion}
\Lint(\omega_1,\ldots, \omega_n) = \sum_{S \subseteq \{\omega_1,\ldots, \omega_n\}} (-1)^{n-|S|} L(S).
\end{equation}

The interior loss corresponds to the M\"obius inversion of the total loss on the partially ordered set defined by containment of simplices.
\end{remark}

As an aside, and while meaningless within probability theory, we note that the functions $L$ and $\Lint$ can both be extended to domains where the \textit{probabilities} $p_i$ are greater than one, and many of the following results hold for any $p_i \in \R^+$.

We now show that $\Lint$ can be used as a measure of entropy. We will later demonstrate that $\Lint$ provides a refinement of the $I$-measure of Yeung \cite{yeung1991new}.

\begin{theorem}
\label{LintSignedMeasure}
Let $\Omega$ be a finite set of outcomes and let $\Sigma$ be the $\sigma$-algebra generated by all of the elements $b\in \Delta(\Omega)$. For $S\subseteq \Delta(\Omega)$ define $\Lint(S) = \sum_{b\in S} \Lint(b)$. Then $(\Delta(\Omega), \Sigma, \Lint)$ is a finite signed measure space.
\end{theorem}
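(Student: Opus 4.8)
The plan is to reduce the claim to the finiteness of $\Delta(\Omega)$. Since $\Omega$ is finite, $\Delta(\Omega) \cong \mathcal{P}(\Omega) \setminus \{\varnothing\}$ is a finite set, so the $\sigma$-algebra $\Sigma$ generated by its singletons is the full power set $\mathcal{P}(\Delta(\Omega))$, which is automatically a (finite) $\sigma$-algebra. It then remains to check the signed-measure axioms for the set function $S \mapsto \Lint(S) = \sum_{b \in S} \Lint(b)$: that it is real-valued, that $\Lint(\varnothing) = 0$, and that it is countably additive on $\Sigma$.

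First I would verify that $\Lint(b)$ is a well-defined finite real number for every atom $b \in \Delta(\Omega)$, by induction on $|b|$. The base case $|b| = 1$ gives $\Lint(b) = 0$. For $|b| = n \geq 2$, the defining recursion writes $\Lint(b)$ as $L(\omega_1,\dots,\omega_n)$ minus a finite sum of terms $\Lint(S)$ with $|S| \leq n-1$, each finite by the inductive hypothesis; and the closed form in Definition~\ref{DEFINITIONTotalLoss} shows $L(\omega_1,\dots,\omega_n) = H(X) - H(X')$ is a finite real for $p_i \in [0,1]$ under the convention $0\log 0 = 0$. Hence every $\Lint(b)$ is finite, and since $\Delta(\Omega)$ is finite, $\Lint(S) = \sum_{b \in S}\Lint(b)$ is a finite real for every $S \in \Sigma$; in particular $|\Lint(\Delta(\Omega))| < \infty$, so the space is a \emph{finite} signed measure space and the value $\pm\infty$ is never attained.

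It remains to check $\sigma$-additivity. We have $\Lint(\varnothing) = 0$ as an empty sum. If $S_1, S_2, \dots \in \Sigma$ are pairwise disjoint with union $S$, then because $\Delta(\Omega)$ is finite only finitely many $S_i$ are nonempty; rearranging the finite sum $\sum_{b \in S}\Lint(b)$ according to the disjoint blocks $S_i$ yields $\Lint(S) = \sum_i \Lint(S_i)$, a series with only finitely many nonzero terms. This establishes all the axioms. The only step that demands genuine care --- and hence the closest thing to an obstacle --- is the well-definedness and finiteness of the individual atomic values $\Lint(b)$, i.e.\ that the M\"obius-inversion recursion produces actual real numbers rather than formal expressions, together with the observation that being \emph{signed} is harmless here since $\pm\infty$ never occurs; once that is in place, countable additivity is immediate from the definition of $\Lint(S)$ and the finiteness of $\Sigma$.
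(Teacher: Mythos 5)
Your proof is correct and takes essentially the same approach as the paper, which simply defines $\Lint(\varnothing)=0$ and observes that countable additivity follows immediately from the definition of $\Lint(S)$ as a sum over atoms. Your version is more careful than the paper's one-line argument --- in particular in checking finiteness of each $\Lint(b)$ and reducing countable to finite additivity via the finiteness of $\Delta(\Omega)$ --- but the substance is the same.
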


\begin{proof}
Setting $\Lint(\varnothing) = 0$, and using the definition of $\Lint(S)$ we see that $\Lint$ is at least countably additive across disjoint sets in $\Sigma$. Hence $(\Delta \Omega, \Sigma, \Lint)$ is a signed measure space.
\end{proof}

We will see in section \ref{SUBSECTION_mutual_information} that this measure is a refinement of the the $I$-measure of Yeung \cite{yeung1991new}. For now, we shall state a series of further results which are useful for calculations with $\Lint$. 

\begin{lemma}[Interior loss identity]
\label{LEMMA_InteriorLossIdentity}
Let $T = \{p_1,\ldots, p_k\}$ for some collection of probabilities. For notational clarity we will write

\begin{equation}
\sigma(T) = \sigma(p_1,\ldots, p_k) = (p_1 + \cdots + p_k)^{(p_1 + \cdots + p_k)}.
\end{equation}

Further still we shall write

\begin{equation}
A_k = \prod_{\substack{S \subseteq\{p_1,\ldots, p_n\} \\ |S| = k}} \sigma(S).
\end{equation}

Then we have that

\begin{equation}
\label{ClearerExpression}
\Lint(p_1,\ldots, p_n) = \sum_{k=1}^n (-1)^{n-k}\log (A_k)
\end{equation}

\end{lemma}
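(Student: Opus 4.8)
The plan is to bypass the recursion entirely and compute from the M\"obius-inverted closed form for $\Lint$ recorded in Remark~\ref{InclusionExclusionIdentities}, namely
\[
  \Lint(p_1,\ldots,p_n) \;=\; \sum_{S\subseteq\{p_1,\ldots,p_n\}} (-1)^{n-|S|}\,L(S),
\]
into which I substitute the logarithmic form of $L$. Expanding Definition~\ref{DEFINITIONTotalLoss} shows that for every nonempty $S$,
\[
  L(S) \;=\; \log\sigma(S) \;-\; \sum_{p_i\in S} p_i\log p_i ,
\]
and, with the harmless conventions $0^0=1$ and $\sigma(\varnothing)=1$, this same identity holds for $S=\varnothing$ (both sides being $0$). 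Substituting and splitting the sum gives $\Lint(p_1,\ldots,p_n) = T_1 - T_2$, with $T_1 = \sum_S (-1)^{n-|S|}\log\sigma(S)$ and $T_2 = \sum_S (-1)^{n-|S|}\sum_{p_i\in S} p_i\log p_i$.

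The term $T_1$ is handled by grouping the subsets $S$ by their cardinality: since $\log$ turns products into sums, $\sum_{|S|=k}\log\sigma(S) = \log\prod_{|S|=k}\sigma(S) = \log A_k$, while the unique $|S|=0$ term contributes $\log\sigma(\varnothing)=0$ and drops out. Hence $T_1 = \sum_{k=1}^{n}(-1)^{n-k}\log A_k$, which is precisely the right-hand side of \eqref{ClearerExpression}. It therefore remains only to show that the correction $T_2$ vanishes.

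For $T_2$ I would interchange the order of summation to isolate the contribution of each probability,
\[
  T_2 \;=\; \sum_{i=1}^{n} (p_i\log p_i)\sum_{S\ni p_i} (-1)^{n-|S|},
\]
and then, for fixed $i$, parametrise the subsets $S$ containing $p_i$ by their remaining part $S\setminus\{p_i\}\subseteq\{p_1,\ldots,p_n\}\setminus\{p_i\}$. The inner sum becomes $\sum_{j=0}^{n-1}\binom{n-1}{j}(-1)^{n-1-j} = (1-1)^{n-1}$, which is $0$ for $n\ge 2$. Thus $T_2=0$ and $\Lint(p_1,\ldots,p_n) = T_1 = \sum_{k=1}^{n}(-1)^{n-k}\log A_k$, as claimed. (The $p_i$ are treated as formally indexed, so coincidences among the values still yield distinct subsets, and the binomial collapse is the one place the argument uses $n\ge 2$ --- exactly the range in which the interior simplices $\Bint(\omega_1,\ldots,\omega_n)$, and with them the nontrivial values of $\Lint$, are defined.)

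I do not expect a genuine obstacle here: the argument is pure bookkeeping on the Boolean lattice of subsets. The only steps that need care are keeping the empty-set conventions consistent --- so that both the formula for $L(S)$ and the vanishing of the $|S|=0$ contribution hold without fuss --- and the sign and index bookkeeping in the binomial cancellation of $T_2$, which is the one place the stated identity could plausibly go wrong. A direct induction on $n$ starting from the recursive definition of $\Lint$ is an alternative, but the M\"obius form exhibits the cancellation transparently and avoids juggling an induction hypothesis over all proper subsets at once.
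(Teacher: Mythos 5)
Your proof is correct and follows essentially the same route as the paper's: both start from the M\"obius-inversion formula \eqref{InclusionExclusion}, substitute the logarithmic form of $L$, group subsets by cardinality to produce the $\log A_k$ terms, and kill the single-probability correction via the alternating binomial identity (your $(1-1)^{n-1}=0$ is the paper's $\sum_k(-1)^{n-k}\binom{n-1}{k-1}=0$, just reached by interchanging the order of summation rather than tracking the exponent of each $\sigma(p_i)$). Your explicit remark that the cancellation requires $n\ge 2$ is a worthwhile refinement, since for $n=1$ the stated identity would give $p_1\log p_1$ rather than $\Lint(p_1)=0$.
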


This lemma demonstrates that our atoms are measured by alternating sums of logarithms, justifying the name \textbf{logarithmic decomposition}. Our next lemma allows the confident inclusion of $0$ in our domain.

\begin{lemma}[Interior loss at 0]
\label{InteriorLossAt0}
For $p_1,\ldots, p_n, x \in \R^+$ where $n\geq 0$, we have

\begin{equation}
\lim_{x\to 0} \Lint(p_1,\ldots, p_n, x) = 0
\end{equation}
\end{lemma}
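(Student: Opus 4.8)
The plan is to prove this by induction on $n$ and exploit the explicit formula from Lemma~\ref{LEMMA_InteriorLossIdentity}, equation~\eqref{ClearerExpression}. The base case $n=0$ is $\lim_{x\to 0}\Lint(x)=\Lint(0)=0$ by definition, and $n=1$ is $\lim_{x\to 0}\Lint(p_1,x)=\lim_{x\to 0}L(p_1,x)$, which one computes directly: $L(p_1,x) = (p_1+x)\log(p_1+x) - p_1\log p_1 - x\log x \to 0$ as $x\to 0$, since $x\log x \to 0$ and $(p_1+x)\log(p_1+x)\to p_1\log p_1$ by continuity.

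For the inductive step I would split the subsets $S \subseteq \{p_1,\ldots,p_n,x\}$ appearing in the inclusion-exclusion formula~\eqref{InclusionExclusion} into those not containing $x$ and those containing $x$, writing
\begin{equation}
\Lint(p_1,\ldots,p_n,x) = \sum_{S\subseteq\{p_1,\ldots,p_n\}} (-1)^{n+1-|S|} L(S) + \sum_{S\subseteq\{p_1,\ldots,p_n\}} (-1)^{n-|S|} L(S\cup\{x\}).
\end{equation}
As $x\to 0$, the key observation is that $L(S\cup\{x\}) \to L(S)$: indeed $L(p_{i_1},\ldots,p_{i_m},x) = \sigma(S\cup\{x\})$-term minus $\sum x_j\log x_j$ minus $x\log x$, and again $x\log x\to 0$ while $(\sum p_{i_j} + x)^{\sum p_{i_j}+x} \to (\sum p_{i_j})^{\sum p_{i_j}}$ by continuity of $t\mapsto t^t$ on $[0,\infty)$ (with the convention $0^0=1$). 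Hence in the limit the two sums become $\sum_S (-1)^{n+1-|S|}L(S) + \sum_S (-1)^{n-|S|}L(S) = 0$, because the signs are opposite and the terms pair off exactly.

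Alternatively, and perhaps more cleanly, one can argue purely from the recursive definition: $\Lint(p_1,\ldots,p_n,x) = L(p_1,\ldots,p_n,x) - \sum_{S\subsetneq\{p_1,\ldots,p_n,x\}} \Lint(S)$. Split the proper subsets into those avoiding $x$ and those containing $x$; the former sum to $L(p_1,\ldots,p_n) - \Lint(p_1,\ldots,p_n)$ (by the identity in Remark~\ref{InclusionExclusionIdentities}), and each $\Lint(S\cup\{x\})$ with $S\subsetneq\{p_1,\ldots,p_n\}$ tends to $0$ by the inductive hypothesis, while $\Lint(p_1,\ldots,p_n,x)$ itself is the only term with all of $p_1,\ldots,p_n$ present and must be handled separately — this circularity is why the first approach via the closed form is preferable. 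So I would go with the inclusion-exclusion route.

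The only real obstacle is the careful justification that $t^t \to 1$ as $t\to 0^+$ and, more generally, the continuity of every $L(S\cup\{x\})$ term at $x=0$; all of this reduces to the single fact $\lim_{x\to 0^+} x\log x = 0$, so the argument is essentially a bookkeeping exercise once that limit is in hand. I would state that limit explicitly, note that it makes each $\sigma(\,\cdot\,)$ factor and each $L(\,\cdot\,)$ continuous at $0$, and then let the sign cancellation in the inclusion-exclusion sum finish the proof.
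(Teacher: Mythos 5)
Your proof is correct and takes essentially the same route as the paper: both split the M\"obius-inversion sum according to whether the subset contains $x$, use $x\log x\to 0$ to show each $x$-containing term converges to its $x$-free counterpart, and let the opposite signs cancel pairwise. The paper phrases this via the grouped products $A_k,B_k$ of Lemma~\ref{LEMMA_InteriorLossIdentity} (showing $\lim_{x\to 0}B_k=A_{k-1}$) whereas you work term-by-term with $L(S\cup\{x\})\to L(S)$, but it is the same cancellation.
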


\begin{lemma}
\label{LintAtInfinity}
Let $p_1,\ldots, p_{n-1}, x \in \R^+$ and let $x$ vary. Then

\begin{equation}
\lim_{x \to \infty} | \Lint(p_1,\ldots, p_{n-1}, x)| = |\Lint(p_1,\ldots, p_{n-1})|
\end{equation}

\end{lemma}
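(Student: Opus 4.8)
The plan is to isolate the dependence on $x$ via the inclusion--exclusion formula~\eqref{InclusionExclusion} and then to tame the leftover $x$-dependent part using only the elementary asymptotics of $t\mapsto t\log t$. Write $P=\{p_1,\ldots,p_{n-1}\}$ and expand $\Lint(p_1,\ldots,p_{n-1},x)$ over all subsets of $P\cup\{x\}$, separating those that omit $x$ from those that contain $x$. A subset omitting $x$ is simply a subset $S\subseteq P$; in the present expansion it carries the sign $(-1)^{n-|S|}$, whereas in the expansion of $\Lint(p_1,\ldots,p_{n-1})$ the same $S$ carries $(-1)^{(n-1)-|S|}$, so the subsets omitting $x$ contribute exactly $-\Lint(p_1,\ldots,p_{n-1})$. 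Hence
\begin{equation}
\Lint(p_1,\ldots,p_{n-1},x) = -\,\Lint(p_1,\ldots,p_{n-1}) + R(x),
\end{equation}
where $R(x):=\sum_{T\subseteq P}(-1)^{\,n-1-|T|}\,L(T\cup\{x\})$ collects the subsets containing $x$ (for $S=T\cup\{x\}$ one has $|S|=|T|+1$). Taking absolute values, the lemma reduces to the single claim that $R(x)\to 0$ as $x\to\infty$; this at the same time identifies the limit of $\Lint(p_1,\ldots,p_{n-1},x)$ itself as $-\Lint(p_1,\ldots,p_{n-1})$, which is why the statement is phrased with moduli.

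To study $R(x)$, I would insert the closed form of $L$ from Definition~\ref{DEFINITIONTotalLoss}: with $s_T:=\sum_{q\in T}q$,
\begin{equation}
L(T\cup\{x\}) = (s_T+x)\log(s_T+x) - x\log x - \sum_{q\in T}q\log q ,
\end{equation}
and since $(s_T+x)\log(s_T+x)-x\log x = s_T\log x + (s_T+x)\log(1+s_T/x)$ with $(s_T+x)\log(1+s_T/x)\to s_T$ as $x\to\infty$, this gives
\begin{equation}
L(T\cup\{x\}) = s_T\log x + s_T - \sum_{q\in T}q\log q + o(1)\qquad (x\to\infty).
\end{equation}
(The a priori divergent $x\log x$ has already cancelled inside each $L(T\cup\{x\})$.) Substituting into $R(x)$ leaves, up to an $o(1)$, an alternating sum whose $\log x$-coefficient is $\sum_{T\subseteq P}(-1)^{n-1-|T|}s_T$, together with two constant contributions, each of the shape $\sum_{T\subseteq P}(-1)^{n-1-|T|}\sum_{q\in T}f(q)$ for $f$ equal to $t\mapsto t$ or to $t\mapsto t\log t$.

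The crux is that each of these alternating sums vanishes. For a fixed $f$, interchanging the order of summation and, for each $q$, writing $T=\{q\}\cup T'$ with $T'\subseteq P\setminus\{q\}$ gives
\begin{equation}
\begin{split}
\sum_{T\subseteq P}(-1)^{\,n-1-|T|}\sum_{q\in T}f(q)
&= \sum_{q\in P}f(q)\sum_{T'\subseteq P\setminus\{q\}}(-1)^{\,n-2-|T'|}\\
&= \sum_{q\in P}f(q)\,(1-1)^{\,n-2} = 0 ,
\end{split}
\end{equation}
which is the binomial identity $\sum_j(-1)^j\binom{m}{j}=0$ applied with $m=|P\setminus\{q\}|=n-2$. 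Hence $R(x)=o(1)\to 0$ and the lemma follows. I expect the only genuine work to be the bookkeeping: writing the asymptotic expansion out carefully, and keeping the two sign conventions $(-1)^{n-|S|}$ and $(-1)^{(n-1)-|S|}$ straight so that the sign flip responsible for the modulus is correctly accounted for. The cancellation step needs $m\ge 1$, i.e.\ $n\ge 3$; the small cases $n\le 2$ can be inspected directly from the definition of $\Lint$.
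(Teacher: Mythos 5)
Your proposal is correct and follows essentially the same route as the paper's proof: both split the inclusion--exclusion expansion of $\Lint(p_1,\ldots,p_{n-1},x)$ into the subsets omitting $x$ (which sum to $-\Lint(p_1,\ldots,p_{n-1})$, explaining the moduli in the statement) and the subsets containing $x$, and then show the latter contribution vanishes as $x\to\infty$. Where you differ is in that last step, and your version is the more careful one: the paper argues that the $x$-dependent part, written as a single log of a fraction, tends to $0$ because the \emph{number} of factors containing $x$ in numerator and denominator agree, whereas the factors are of the form $(s_T+x)^{s_T+x}$, so matching counts only cancels the $x\log x$ terms; cancelling the residual $s_T\log x$ terms requires precisely your second binomial identity $\sum_{T\subseteq P}(-1)^{n-1-|T|}s_T=0$, which holds only for $n\ge 3$. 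Your explicit asymptotic expansion of $t\log t$ makes this visible and correctly flags that the cancellation fails at $n=2$ --- and indeed for $n=2$ one has $\Lint(p_1,x)=L(p_1,x)\sim p_1\log x\to\infty$ while $\Lint(p_1)=0$, so the lemma genuinely requires $n\ge 3$ (consistent with its only use, in Corollary~\ref{InteriorMagnitude}); rather than saying the small cases ``can be inspected directly,'' you should state this restriction explicitly.
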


The result of this lemma will be useful for a corollary which follows from the next result. The following theorem demonstrates the useful property that logarithmic atoms have an intrinsic sign.

\begin{theorem}
\label{AlternatingDerivatives}
Let $p_2,\ldots, p_{n} \in \R^+$ be a sequence of nonzero arguments for $n\geq 2$ and $m\geq 0$. Then

\begin{equation}
(-1)^{m+n} \frac{\partial^m \Lint}{\partial x^m}(x, p_2,\ldots, p_n) \geq 0.
\end{equation}

\end{theorem}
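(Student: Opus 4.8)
The plan is to collapse the entire statement onto the single elementary function $f(t)=t\log t$ and then exploit the complete monotonicity of $f''(t)=1/t$ on $\R^+$. By Lemma~\ref{LEMMA_InteriorLossIdentity} we have $\log A_k=\sum_{|S|=k}f\!\left(\sum_{p\in S}p\right)$, so
\[
\Lint(p_1,\dots,p_n) \;=\; \sum_{\varnothing\neq S\subseteq\{p_1,\dots,p_n\}}(-1)^{\,n-|S|}\,f\!\left(\textstyle\sum_{p\in S}p\right) \;=\; \bigl(\Delta_{p_1}\!\cdots\Delta_{p_n}f\bigr)(0),
\]
where $\Delta_h g(y):=g(y+h)-g(y)$ and the empty-set term is appended freely since $f(0)=0$. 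This is simply the inclusion--exclusion identity of Remark~\ref{InclusionExclusionIdentities} read as an iterated mixed forward difference of $f$.

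\textbf{Differentiating in $x$ and passing to an integral.} Writing $x=p_1$, the summand indexed by $S$ depends on $x$ exactly when $p_1\in S$, so for $m\geq 1$ all $x$-free terms vanish and, setting $S=\{p_1\}\cup T$,
\[
\frac{\partial^m\Lint}{\partial x^m}(x,p_2,\dots,p_n) \;=\; \sum_{T\subseteq\{p_2,\dots,p_n\}}(-1)^{(n-1)-|T|}f^{(m)}\!\left(x+\textstyle\sum_{p\in T}p\right) \;=\; \bigl(\Delta_{p_2}\!\cdots\Delta_{p_n}f^{(m)}\bigr)(x).
\]
Applying the fundamental theorem of calculus once per $\Delta_{p_i}$ turns this into a box integral
\[
\frac{\partial^m\Lint}{\partial x^m}(x,p_2,\dots,p_n) \;=\; \int_0^{p_2}\!\!\cdots\int_0^{p_n}f^{(m+n-1)}\!\bigl(x+t_2+\cdots+t_n\bigr)\,dt_n\cdots dt_2,
\]
while for $m=0$ I instead retain the full $n$-fold difference and write $\Lint(x,p_2,\dots,p_n)$ as the analogous $n$-fold integral of $f^{(n)}$ over $[0,x]\times[0,p_2]\times\cdots\times[0,p_n]$. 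The reason for treating $m=0$ separately is that the extra operator $\Delta_x$ supplies one more derivative, so in every case the order of the derivative of $f$ that appears is at least $2$ (using $n\geq 2$).

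\textbf{Reading off the sign.} A direct computation gives $f^{(k)}(t)=(-1)^k(k-2)!\,t^{\,1-k}$ for $k\geq 2$, so on $\R^+$ each $f^{(k)}$ has the constant sign $(-1)^k$ (equivalently, $f''=1/t$ is completely monotone on $\R^+$). Since $x,p_2,\dots,p_n>0$, the argument of $f^{(m+n-1)}$ (resp.\ $f^{(n)}$ when $m=0$) stays strictly positive throughout the region of integration, so the integrand has a single fixed sign and the integral inherits it; comparing with the prefactor $(-1)^{m+n}$ then yields the asserted inequality. Degenerate configurations in which a step length degenerates to $0$ are covered by continuity through Lemma~\ref{InteriorLossAt0}, which also shows the inequality is not strict in general.

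\textbf{Main obstacle.} The genuinely delicate point is the parity bookkeeping in the last step: one must check that there are always enough unused variables (difference operators) to raise the order of differentiation of $f$ up to $2$, where the constant-sign phenomenon begins, and that the resulting exponent lines up with $(-1)^{m+n}$ uniformly across the regimes $m=0$ and $m\geq 1$. An equivalent, arguably cleaner, route is induction on $n$: the base case $n=2$ is $\Lint(x,p_2)=f(x+p_2)-f(x)-f(p_2)$, whose derivatives' signs are precisely the superadditivity ($m=0$) and higher convexity ($m\geq 1$) of $t\log t$, and the inductive step peels off one $\Delta_{p_i}$ and differentiates under the integral sign, tracking the parity one operator at a time rather than all at once. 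The constant sign of $\tfrac{\partial}{\partial x}\Lint(\,\cdot\,,p_2,\dots,p_n)$ extracted from the case $m=1$ is the monotonicity statement that feeds, together with Lemma~\ref{LintAtInfinity} and Lemma~\ref{InteriorLossAt0}, into the promised corollary.
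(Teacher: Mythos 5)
Your reduction to iterated forward differences of $f(t)=t\log t$ and the box-integral representation are both correct, and this is a genuinely cleaner route than the paper's induction on $n$. But the one step you defer to the end --- ``comparing with the prefactor $(-1)^{m+n}$ then yields the asserted inequality'' --- is precisely the step that fails, and it fails for every $m\geq 1$. Your own formula gives, for $m\geq 1$,
\begin{equation*}
\frac{\partial^m \Lint}{\partial x^m}(x,p_2,\ldots,p_n)=\int_0^{p_2}\cdots\int_0^{p_n} f^{(m+n-1)}(x+t_2+\cdots+t_n)\,dt_n\cdots dt_2,
\end{equation*}
whose integrand has the constant sign $(-1)^{m+n-1}$ on $\R^+$; multiplying by $(-1)^{m+n}$ therefore produces a quantity that is $\leq 0$, the reverse of the claim. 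The smallest case makes this concrete: $\frac{\partial \Lint}{\partial x}(x,p_2)=\log\frac{x+p_2}{x}>0$, while $(-1)^{1+2}=-1$, so $(-1)^{m+n}\frac{\partial \Lint}{\partial x}(x,p_2)<0$. As written, the proposal asserts a conclusion that its own displayed formulas refute.

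What your computation actually establishes is $(-1)^{n}\,\Lint(x,p_2,\ldots,p_n)\geq 0$ for $m=0$ (where the exponent in the theorem is right) together with $(-1)^{m+n-1}\frac{\partial^m \Lint}{\partial x^m}(x,p_2,\ldots,p_n)\geq 0$ for $m\geq 1$: the sign of $\Lint$ coincides with that of its first $x$-derivative (consistent with $\Lint\to 0$ as $x\to 0^+$ from Lemma~\ref{InteriorLossAt0}), and the alternation in $m$ only begins at $m=1$. This off-by-one is not an artifact of your method --- the paper's own inductive proof exhibits the same tension, since its base case computes $E_2(x)=\frac{\partial \Lint}{\partial x}(x,p_2)=\log\frac{x+p_2}{x}>0$, which already violates the stated inequality at $m=1$, $n=2$. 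To repair the write-up you must carry the parity bookkeeping through explicitly and state the exponent your integral representation actually delivers (or split the statement into the $m=0$ case, which is all that is invoked immediately after the theorem, plus a fixed-sign statement for the first derivative); note that Corollary~\ref{InteriorMagnitude} only needs strict monotonicity in the last argument, so it survives the corrected signs.
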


Setting $m=0$ we immediately see that the sign of logarithmic atoms alternates solely on the number of outcomes they contain; a property which standard co-informations do not have.

\begin{corollary}[Interior magnitude can only decrease]
\label{InteriorMagnitude}
Let $p_1,\ldots, p_{n-1}, \tau \in \R^+\cup\{0\}$ for $n\geq 3$. Then
\begin{equation}
|\Lint(p_1,\ldots, p_{n-1}, \tau)| < |\Lint(p_1,\ldots, p_{n-1})|
\end{equation}
\end{corollary}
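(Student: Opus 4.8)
We may fix $p_1,\ldots,p_{n-1}>0$ and let $\tau\ge 0$ vary (if one of the $p_i$ is $0$ both sides vanish, by Lemma \ref{InteriorLossAt0}, so the content is in the strictly positive case). The plan is to analyse the single-variable function $g(\tau):=|\Lint(p_1,\ldots,p_{n-1},\tau)|$ on $[0,\infty)$ and show that it increases strictly from $g(0)=0$ up to the limiting value $M:=|\Lint(p_1,\ldots,p_{n-1})|$, which it approaches but never attains. Since $\Lint$ is symmetric in its arguments (it is assembled from symmetric sums over subsets), I may place $\tau$ in the first slot and apply Theorem \ref{AlternatingDerivatives} directly to $\tau\mapsto\Lint(\tau,p_1,\ldots,p_{n-1})$.

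First I would record the endpoints and the regularity: $g(0)=0$ by Lemma \ref{InteriorLossAt0}; $g$ is continuous on $[0,\infty)$ and smooth on $(0,\infty)$, since $\Lint$ is a finite integer combination of the functions $L$ (Remark \ref{InclusionExclusionIdentities}); and $\lim_{\tau\to\infty}g(\tau)=M<\infty$ by Lemma \ref{LintAtInfinity}. Applying Theorem \ref{AlternatingDerivatives} with $m=0$, the quantity $\Lint(p_1,\ldots,p_{n-1},\tau)$ keeps a fixed sign $(-1)^n$ for all $\tau\ge 0$, so $g(\tau)=(-1)^n\Lint(p_1,\ldots,p_{n-1},\tau)$ as an identity of smooth functions on $(0,\infty)$; differentiating and invoking the theorem with $m=1$ shows that $g'$ has a constant sign on $(0,\infty)$, i.e. $g$ is weakly monotone there.

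Next I would pin down the direction of this monotonicity. A non-increasing, non-negative, continuous function $g$ with $g(0)=0$ must be identically zero, which would force $M=\lim_{\tau\to\infty}g(\tau)=0$; but $M>0$, because for $n-1\ge 2$ and positive $p_i$ the quantity $\Lint(p_1,\ldots,p_{n-1})$ is a non-trivial alternating sum of logarithms of positive reals (Lemma \ref{LEMMA_InteriorLossIdentity}), which is nonzero. Hence $g$ is non-decreasing on $[0,\infty)$, so $g(\tau)\le M$ for every finite $\tau\ge 0$.

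The remaining step, and the only genuine obstacle, is to improve $\le$ to $<$. It suffices to show that $\partial_\tau\Lint(p_1,\ldots,p_{n-1},\tau)$ has only isolated zeros on $(0,\infty)$: then $g$, being non-decreasing with non-negative derivative that vanishes only at isolated points, is strictly increasing, and a strictly increasing function with limit $M$ satisfies $g(\tau)<M$ for all finite $\tau$. By Lemma \ref{LEMMA_InteriorLossIdentity}, $\partial_\tau\Lint(p_1,\ldots,p_{n-1},\tau)$ is, up to an additive constant, a finite $\pm$-combination of terms $\log(q_S+\tau)$ over the subsets $S$ that contain the $\tau$-coordinate (with each $q_S\ge 0$); this function is real-analytic on $(0,\infty)$, and it is not identically zero there, since otherwise $\Lint(p_1,\ldots,p_{n-1},\tau)$ would be constant in $\tau$, hence equal to its value $0$ at $\tau=0$, contradicting $\lim_{\tau\to\infty}g(\tau)=M>0$. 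A nonzero real-analytic function on an interval has only isolated zeros, which finishes the argument. (If the inequality in Theorem \ref{AlternatingDerivatives} is already strict for strictly positive arguments, this last paragraph collapses to a one-line remark.)
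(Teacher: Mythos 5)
Your argument follows the same route as the paper's: pin the value $0$ at $\tau=0$ via Lemma \ref{InteriorLossAt0}, the limiting value $M=|\Lint(p_1,\ldots,p_{n-1})|$ at infinity via Lemma \ref{LintAtInfinity}, and squeeze using continuity and the monotonicity supplied by Theorem \ref{AlternatingDerivatives}. In fact you are more careful than the paper, which simply asserts that $\Lint$ ``varies strictly monotonically'' by a theorem whose statement only gives a non-strict inequality; your real-analyticity argument (isolated zeros of the derivative) legitimately closes that gap, and as you note it becomes unnecessary once one observes that the inequalities in the paper's inductive proof of Theorem \ref{AlternatingDerivatives} are actually strict for strictly positive arguments.

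One step does need repair: your justification that $M>0$. You argue that $\Lint(p_1,\ldots,p_{n-1})$ is ``a non-trivial alternating sum of logarithms of positive reals, which is nonzero,'' but that inference is invalid in general ($\log 2+\log 3-\log 6=0$ is a non-trivial alternating sum of logarithms). Since $M>0$ is load-bearing twice in your write-up --- it fixes the direction of monotonicity and rules out the derivative being identically zero --- and since the corollary itself would be false if $M=0$, you should instead obtain it from the strict form of Theorem \ref{AlternatingDerivatives} with $m=0$ applied to the $n-1\geq 2$ arguments $p_1,\ldots,p_{n-1}$ (the base case of that induction is $\Lint(x,p_2)=L(x,p_2)>0$, and the inductive step preserves strictness). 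A smaller point: when some $p_i=0$ both sides vanish and the strict inequality as literally stated fails; this is a defect of the statement shared by the paper's own proof, but it would be cleaner to restrict to $p_i>0$ explicitly rather than describe the degenerate case as contentless.
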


This result is quite powerful in that it works for $p_1,\ldots, p_{n-1}, \tau \in [0,\infty)$. For our information-theoretical purposes, we will naturally require that $p_i\in [0,1]$, so the measure of successively higher-order volumes in $\Delta(\Omega)$ will strictly decrease, with the slowest descent for $p_1 = \cdots = p_{n}$.

\section{Quantities of information}

Having constructed the measurable space $\Delta(\Omega)$ we will now demonstrate its utility in characterising various variable-level information quantities.

Firstly we will show how mutual information and co-information can be reinterpreted using the logarithmic decomposition, and we show that the information measure $\Lint$ is consistent with the prevailing measure of Yeung \cite{yeung1991new}.

\subsection{Mutual and Co-information}
\label{SUBSECTION_mutual_information}
Suppose we have two variables $X$ and $Y$ defined on the same outcome space $\Omega$. This outcome space can be taken to be the meet of the two partitions corresponding to $X$ and $Y$ if necessary. 

The degree to which the two variables interact can be quantified in terms of their entropies via their mutual information, $I(X;Y)$, which is naturally derived as $I(X;Y) := H(X) + H(Y) - H(X,Y)$. This expression can be derived from multiple perspectives, including homologically, as shown by Baudot and Bennequin \cite{baudot2015homological}. Alternatively, it is given by the Kullback-Leibler divergence between the joint distribution and the product of the marginal distributions $D_{KL}(P(X,Y) || P(X)P(Y))$ \cite{cover1991elements}, that is, it captures the degree to which the joint distribution diverges from independence.

One possible generalisation of the mutual information for multiple variables (although others exist, for example the \textit{total correlation} \cite{watanabe1960information} or the \textit{dual total correlation} \cite{te1978nonnegative}) is the \textbf{interaction information} or \textbf{co-information} \cite{mcgill1954multivariate, bell2003co}. This can be defined recursively using

\begin{equation}
I(X_1;\ldots ;X_r) = I(X_1;\ldots; X_{r-1}) - I(X_1;\ldots; X_{r-1} | X_r).
\end{equation}

The co-information can be taken as representing the information that is common to several variables. The next definition will give us the connection between a random variable and its logarithmic decomposition. Doing this will enable us to discuss the co-information in terms of logarithmic atoms.

\begin{definition}
\label{DEFINITION_content}
Given a random variable $X$, we define the \textbf{content} $\content(X)$ inside of $\Delta(\Omega)$ to be the set of all boundaries crossed by $X$. That is, if $X$ corresponds to a partition $P_1,\ldots, P_n$, then

\begin{multline}
\mathcal{C}(X) = \{ \text{$\Bint(S): S \subseteq \Omega, \exists\,  \omega_i,\omega_j \in S$} \\
\text{with $\omega_i \in P_k$, $\omega_j\in P_l$ such that $k\neq l$ }\}.
\end{multline}

Intuitively, this means that at least two of the outcomes in $\Bint(\omega_1,\ldots, \omega_n)$ correspond to distinct events in $X$. We will in general make use of $\mathcal{C}$ to represent the logarithmic decomposition functor from random variables to their corresponding sets in $\Delta\Omega$.
\end{definition}

\begin{example}
To demonstrate our refinement, we consider the space $\Omega = \{1,2,3,4\}$. Let the partitions be given by $X =\{\{1, 3\}, \{2, 4\}\}$ and $Y = \{\{1, 2\}, \{3, 4\}\}$, as in figure \ref{FIG_two_random_vars}.

Taking the intersection of the contents $\mathcal{C}(X) \cap \mathcal{C}(Y)$ gives the logarithmic content mutual to both variables. These logarithmic atoms are given in figure \ref{XYLogarithmic}.

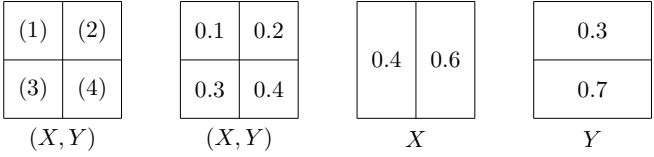
\begin{figure}[h]
\centering
    \scalebox{0.88}{
    \begin{tikzpicture}[scale=0.88]
    \draw (0,0) -- (2, 0);
    \draw (0,0) -- (0, -2);
    \draw (0, -2) -- (2, -2);
    \draw (2, -2) -- (2, 0);
    \draw (0, -1) -- (2, -1);
    \draw (1, 0) -- (1, -2);
    \node at (0.5, -0.5) {$(1)$};
    \node at (1.5, -0.5) {$(2)$};
    \node at (0.5, -1.5) {$(3)$};
    \node at (1.5, -1.5) {$(4)$};
    \node at (1, -2.35) {$(X, Y)$};

    \draw (3,0) -- (5, 0);
    \draw (3,0) -- (3, -2);
    \draw (3, -2) -- (5, -2);
    \draw (5, -2) -- (5, 0);
    \draw (3, -1) -- (5, -1);
    \draw (4, 0) -- (4, -2);
    \node at (3.5, -0.5) {$0.1$};
    \node at (4.5, -0.5) {$0.2$};
    \node at (3.5, -1.5) {$0.3$};
    \node at (4.5, -1.5) {$0.4$};
    \node at (4, -2.35) {$(X, Y)$};

    \draw (6,0) -- (8, 0);
    \draw (6,0) -- (6, -2);
    \draw (6, -2) -- (8, -2);
    \draw (8, -2) -- (8, 0);
    \draw (7, 0) -- (7, -2);
    \node at (6.5, -1) {$0.4$};
    \node at (7.5, -1) {$0.6$};
    \node at (7, -2.35) {$X$};

    \draw (9,0) -- (11, 0);
    \draw (9,0) -- (9, -2);
    \draw (9, -2) -- (11, -2);
    \draw (11, -2) -- (11, 0);
    \draw (9, -1) -- (11, -1);
    \node at (10, -0.5) {$0.3$};
    \node at (10, -1.5) {$0.7$};
    \node at (10, -2.35) {$Y$};

    \end{tikzpicture}
    }
\caption{Two random variables on the set $\Omega = \{1, 2, 3, 4\}$ with some illustrative probabilities.}
\label{FIG_two_random_vars}
\end{figure}

\begin{figure}[!t]
\centering
        \begin{tikzpicture}
        \newcommand{\spaceblubber}{0.35}
        \newcommand{\outerspace}{2.7}
        \newcommand{\innerspace}{0.9}
        \newcommand{\circlesize}{5.5cm}
        \newcommand{\circledisplacement}{0.9}
        
        \node [draw, circle, minimum size = \circlesize, label={135:$X$}] (X) at (-\circledisplacement,0){};
        
        \node [draw, circle, minimum size = \circlesize, label={45:$Y$}] (Y) at (\circledisplacement,0){};
        
        \node at (-\outerspace, \spaceblubber) {$\Bint(1, 2)$};
        \node at (-\outerspace, -\spaceblubber) {$\Bint(3, 4)$};
        \node at (\outerspace, \spaceblubber) {$\Bint(1, 3)$};
        \node at (\outerspace, -\spaceblubber) {$\Bint(2, 4)$};
        
        \node at (0, 5*\spaceblubber) {$\Bint(1, 4)$};
        \node at (0, 3*\spaceblubber) {$\Bint(2, 3)$};
        \node at (-\innerspace, 1*\spaceblubber) {$\Bint(1, 2, 3)$};
        \node at (\innerspace, 1*\spaceblubber) {$\Bint(1, 2, 4)$};
        \node at (-\innerspace, -1*\spaceblubber) {$\Bint(1, 3, 4)$};
        \node at (\innerspace, -1*\spaceblubber) {$\Bint(2, 3, 4)$};
        \node at (0, -3*\spaceblubber) {$\Bint(1, 2, 3, 4)$};
        \end{tikzpicture}
        
        \caption{Logarithmic atoms of $\Omega = \{1,2, 3, 4\}$ in the space $\Delta(\Omega)$.}
        \label{XYLogarithmic}
\end{figure}
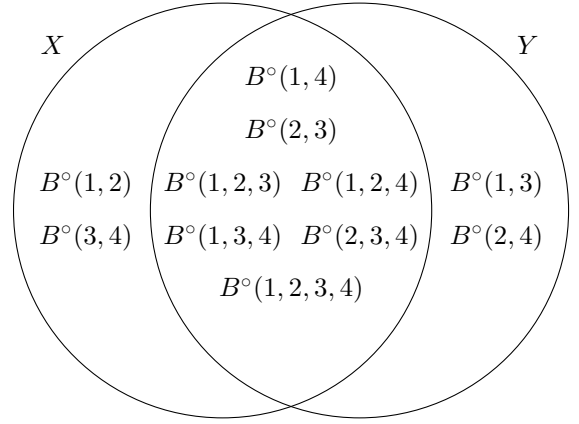

\end{example}

The next theorem is the main result of this paper, demonstrating that Logarithmic Decomposition is at least consistent with the standard atomic decomposition of Yeung \cite{yeung1991new}. 

\begin{theorem}
\label{THM_yeung_correspondence}
Let $R$ be a region on an $I$-diagram of variables $X_1,\ldots, X_r$ with Yeung's $I$-measure. Then
\begin{equation}
I(R) = \sum_{B \in \content(R)} \Lint(B).
\end{equation}
That is, the interior loss measure is consistent with Yeung's $I$-measure.
\end{theorem}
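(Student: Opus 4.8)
The plan is to lean on Yeung's characterisation \cite{yeung1991new} of the $I$-measure as the \emph{unique} signed measure on the $\sigma$-algebra $\mathcal{F}$ generated by $\tilde{X}_1,\ldots,\tilde{X}_r$ with the property that $I\!\left(\bigcup_{i\in A}\tilde{X}_i\right)=H\!\left((X_i)_{i\in A}\right)$ for every nonempty $A\subseteq\{1,\ldots,r\}$; since the $2^{r}-1$ ``union values'' determine all atom values by M\"obius inversion on the Boolean lattice, it suffices to build a signed measure that agrees with $I$ on the union-regions and that coincides with the right-hand side of the claimed identity. First I would note that $\content$ extends uniquely to a Boolean homomorphism $\mathcal{F}\to\Sigma$ sending $\tilde{X}_i\mapsto\content(X_i)\subseteq\Delta(\Omega)$, all Boolean operations being taken inside the fixed ambient set $\bigcup_i\content(X_i)$; composing with the signed measure $\Lint$ of Theorem~\ref{LintSignedMeasure} produces $\nu(R):=\sum_{B\in\content(R)}\Lint(B)$, which is again a signed measure on $\mathcal{F}$ because $\content$ carries disjoint regions to disjoint subsets of $\Delta(\Omega)$ and $\Lint$ is additive. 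The theorem then reduces to checking $\nu=I$ on the union-regions.

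The core computation is a single-variable identity: for any discrete variable $Z$ with partition $\{P_1,\ldots,P_n\}$ of $\Omega$,
\begin{equation}
\sum_{B\in\content(Z)}\Lint(B)=H(Z).
\end{equation}
I would prove this by splitting the nonempty subsets $S\subseteq\Omega$ into those whose simplex $\Bint(S)$ crosses $Z$ --- exactly those indexing $\content(Z)$ --- and those with $S$ contained in a single cell $P_i$. By the inclusion--exclusion identity of Remark~\ref{InclusionExclusionIdentities}, $\sum_{\varnothing\neq S\subseteq\Omega}\Lint(S)$ equals $L$ of all the outcome probabilities, which is $\sum_{\omega}p_\omega\log(1/p_\omega)$ by \eqref{EqnEntropyEqualToTotalLoss}, whereas $\sum_{\varnothing\neq S\subseteq P_i}\Lint(S)=L(\{p_\omega:\omega\in P_i\})=\sum_{\omega\in P_i}p_\omega\log(1/p_\omega)-P(P_i)\log(1/P(P_i))$ by the expansion in Definition~\ref{DEFINITIONTotalLoss}. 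Subtracting the within-cell contributions telescopes the $\sum p_\omega\log(1/p_\omega)$ terms and leaves precisely $\sum_i P(P_i)\log(1/P(P_i))=H(Z)$. Applying this with $Z$ the joint variable $X_A:=(X_i)_{i\in A}$, whose partition is the meet of the partitions of the $X_i$ with $i\in A$, and observing that $\Bint(S)$ crosses $X_A$ iff it crosses some $X_i$ with $i\in A$ --- so that $\content(X_A)=\bigcup_{i\in A}\content(X_i)=\content\!\left(\bigcup_{i\in A}\tilde{X}_i\right)$ --- gives $\nu\!\left(\bigcup_{i\in A}\tilde{X}_i\right)=H(X_A)=I\!\left(\bigcup_{i\in A}\tilde{X}_i\right)$. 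Hence $\nu=I$ on $\mathcal{F}$ by uniqueness, and in particular $I(R)=\nu(R)=\sum_{B\in\content(R)}\Lint(B)$ for every region $R$.

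I expect the telescoping computation to be routine; the real care goes into the homomorphism bookkeeping. One must check that $\content$ genuinely respects complements, differences and intersections of regions once all operations are performed inside the common ambient set $\content(X_1,\ldots,X_r)=\bigcup_i\content(X_i)$, and that this ambient set is the correct counterpart of the universe $\tilde{X}_1\cup\cdots\cup\tilde{X}_r$ of the $I$-diagram (for which the working assumption that $\Omega$ may be taken to be the meet of all the partitions is convenient but not strictly needed). Channelling the whole argument through the union-regions, rather than through atoms or conditional-information regions, is exactly what spares us from ever evaluating $\content$ on a complement by hand, since Yeung's uniqueness theorem then does the remaining work.
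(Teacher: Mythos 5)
Your proof is correct, and it reaches the conclusion by a genuinely different route in its second half. The computational core is the same as the paper's: your single-variable identity $\sum_{B\in\content(Z)}\Lint(B)=H(Z)$, obtained by splitting the simplices of $\Delta(\Omega)$ into those crossing $Z$ and those interior to a single cell and telescoping $L(\Omega)-\sum_i L(P_i)$, is exactly the content of the paper's partition lemma (Lemma \ref{LEMMA_partition_law}) plus step (i) of its proof, just executed directly via Remark \ref{InclusionExclusionIdentities} rather than by the paper's merge-two-cells induction. Where you diverge is in bootstrapping from single variables to arbitrary regions: the paper computes $I(X;Y)$ explicitly by tracking coefficients in the formal sum $\content(X)+\content(Y)-\content(X,Y)$ and then asserts that ``the case for $n$ variables follows inductively,'' whereas you extend $\content$ to a Boolean homomorphism on $\mathcal{F}$, observe $\content(X_A)=\bigcup_{i\in A}\content(X_i)$, verify agreement with $H$ on all union-regions, and invoke Yeung's uniqueness theorem. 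Your route buys two things: it genuinely covers \emph{every} region $R$ of the $I$-diagram (including conditional-information regions and higher-order atoms, which the paper's induction leaves implicit), and it forces you to say what $\content(R)$ actually means for a non-variable region --- namely the image under the Boolean homomorphism --- a definition the theorem statement presupposes but the paper never spells out. What the paper's route buys in exchange is self-containedness (no appeal to Yeung's uniqueness result as a black box) and a concrete demonstration of \emph{why} $\content(X)\cap\content(Y)$ is the right set for $I(X;Y)$, via the coefficient bookkeeping in the formal sum. The one point worth making fully explicit in your write-up is the existence half of the ``extends uniquely to a Boolean homomorphism'' claim: this is unproblematic precisely because Yeung's field $\mathcal{F}$ is constructed freely on its $2^r-1$ atoms relative to the universal set $\bigcup_i\tilde{X}_i$, so any assignment of the generators $\tilde{X}_i\mapsto\content(X_i)$ (with complements taken inside $\bigcup_i\content(X_i)$) does extend; with that noted, the argument is complete.
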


We note also that all inclusion-exclusion expressions such as that in equation \eqref{Eqn_standard_mutual_information} can be naturally extended to the formal sum used in the proof. Provided all of the coefficients in the formal sum are either 0 or 1, we can assign the informational quantity with a \textit{content}, i.e. a subset of $\Delta(\Omega)$. For example, the O-information of Rosas et al. \cite{rosas2019quantifying} satisfies this property and hence has an associated content.

\subsection{Common Information}

An intrinsic problem in the study of random variables is that interactions between variables often cannot be encoded with a third variable \cite{gacs1973common}. The G\'acs-K\"orner formulation of this \textbf{common information} has, for instance, been shown to have little relation to the mutual information in most scenarios.

We have seen in section \ref{SUBSECTION_mutual_information} that mutual information can be completely described by the intersection of variable contents in logarithmic decomposition. We will demonstrate that the logarithmic decomposition can also describe the common information of G\'acs and K\"orner.

To do this, we will demonstrate that the common information shared between a finite collection of variables $X_1,\ldots, X_r$ corresponds to a subset of $\content(X_1)\cap \cdots \cap \content(X_r)$.

\begin{definition}[G\'acs-K\"orner Common Information]
The G\'acs-K\"orner common information on a finite set of random variables $X_1,\ldots, X_r$ \cite{gacs1973common} is given by

\begin{multline}
C_{GK}(X_1;\ldots; X_r) = \max_{Z} H(Z) \\ \text{such that $f_1(X_1) = \cdots = f_r(X_r) = Z$ for some $f_i$.}
\end{multline}
\end{definition}

This common information encodes interactions between variables which can be extracted and represented by another variable \cite{yu2016generalized}. The common information is known to be usually far less than the mutual information \cite{gacs1973common}. We now demonstrate that common information can be represented as a subset in $\Delta(\Omega)$.

\begin{theorem}
\label{THM_gacs_korner}
The G\'acs-K\"orner common information of a finite set of variables $X_i$ corresponds to the maximal subset $C$ of $\bigcap_{i}\content(X_i)$ such that there exists some random variable $Z$ with $\content(Z) = C$.
\end{theorem}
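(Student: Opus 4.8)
The plan is to show two inclusions: that the content of any common variable $Z$ (in the G\'acs--K\"orner sense) lies inside $\bigcap_i \content(X_i)$, and conversely that the maximal such content is realised by a genuine random variable whose entropy equals $C_{GK}$. The key tool throughout is Theorem~\ref{THM_yeung_correspondence}, which lets us pass freely between $I$-measure values and sums of $\Lint$ over contents, together with the functoriality of $\content$ noted in Definition~\ref{DEFINITION_content}.

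First I would establish the easy direction. Suppose $Z$ satisfies $f_i(X_i) = Z$ for all $i$, i.e.\ $Z$ is a coarsening of each $X_i$. Since coarsening a partition can only remove boundaries, every boundary crossed by $Z$ is also crossed by each $X_i$; hence $\content(Z) \subseteq \content(X_i)$ for all $i$, so $\content(Z) \subseteq \bigcap_i \content(X_i)$. By equation~\eqref{EqnEntropyEqualToTotalLoss} and Theorem~\ref{LintSignedMeasure}, $H(Z) = \Lint(\content(Z)) = \sum_{B \in \content(Z)} \Lint(B)$. This shows that the content of any admissible $Z$ is a subset $C$ of $\bigcap_i \content(X_i)$ that is \emph{realisable}, meaning $C = \content(Z')$ for some random variable $Z'$ (namely $Z$ itself).

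Next I would identify the realisable subsets of a given content. The crucial structural fact is that a subset $C \subseteq \Delta(\Omega)$ equals $\content(Z)$ for some random variable $Z$ precisely when the "non-crossed" relation it induces on $\Omega$ — declare $\omega_i \sim_C \omega_j$ if $\Bint(\omega_i,\omega_j) \notin C$ — is an equivalence relation \emph{and} $C$ is exactly the set of simplices having at least two points in distinct classes. One direction is immediate from Definition~\ref{DEFINITION_content}; for the other, given such a $C$ one checks transitivity is exactly what is needed for the induced partition to be well defined, and then $\content$ of that partition recovers $C$. Using this, the family of realisable subsets of $\bigcap_i \content(X_i)$ is closed under an appropriate join operation: given two common coarsenings $Z, Z'$ of all the $X_i$, their common refinement's complementary partition — equivalently the finest partition coarsening all of $Z, Z', X_1, \dots, X_r$ simultaneously in the lattice of partitions — is again a common coarsening, and its content is the union of the two contents intersected with $\bigcap_i\content(X_i)$. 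Hence there is a unique maximal realisable subset $C^\ast \subseteq \bigcap_i \content(X_i)$, attained by a unique (up to relabelling) variable $Z^\ast$.

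Finally I would assemble the pieces. By the first step, every admissible $Z$ has $\content(Z) \subseteq C^\ast$, so by Corollary~\ref{InteriorMagnitude} and the sign alternation in Theorem~\ref{AlternatingDerivatives} — or more directly by monotonicity of $\Lint$ on contents, since adding boundaries to a partition increases its entropy — we get $H(Z) = \Lint(\content(Z)) \leq \Lint(C^\ast) = H(Z^\ast)$. Since $Z^\ast$ is itself admissible (it coarsens every $X_i$, being built from a partition that coarsens each $\bm{Q}_{X_i}$), the maximum defining $C_{GK}$ is attained at $Z^\ast$, and $C_{GK}(X_1; \dots; X_r) = H(Z^\ast) = \Lint(C^\ast)$, with $C^\ast$ the asserted maximal subset. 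I expect the main obstacle to be the second step: proving cleanly that the maximal \emph{realisable} subset of $\bigcap_i \content(X_i)$ exists and corresponds to the largest common coarsening, i.e.\ reconciling the lattice-of-partitions picture (where the relevant object is a join/meet in the partition lattice) with the subset-of-$\Delta(\Omega)$ picture, and checking that the "at least two points in distinct classes" condition is preserved under the union operation so that no spurious simplices are forced in.
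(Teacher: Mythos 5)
Your overall architecture matches the paper's: both arguments reduce the G\'acs--K\"orner variable to the finest common coarsening of the $X_i$ in the partition lattice and then translate coarsening relations into containment of contents. Your first step (admissible $Z$ implies $\content(Z)\subseteq\bigcap_i\content(X_i)$) and your final assembly are correct, and you are more explicit than the paper about why the resulting content is \emph{maximal} among realisable subsets, which the paper largely leaves implicit. One fact you should state explicitly, since your maximality argument needs it, is the converse of your first step: if $\content(Z')\subseteq\content(X_i)$ then $Z'$ is a coarsening of $X_i$ (apply the definition of content to the pairs $\Bint(\omega_i,\omega_j)$). This is what lets you identify ``realisable subsets of the intersection'' with ``common coarsenings'' before running the lattice argument.

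There is, however, one step that fails as written. In your closure-under-join argument you take two admissible $Z, Z'$ and form ``the finest partition coarsening all of $Z, Z', X_1,\dots,X_r$.'' That object is a \emph{coarsening} of both $Z$ and $Z'$, so its content is contained in $\content(Z)\cap\content(Z')$ --- the opposite of what you need. The correct construction is the coarsest common \emph{refinement} $ZZ'$ (the joint variable): its content is exactly $\content(Z)\cup\content(Z')$, and it is still a coarsening of every $X_i$ because each part of $X_i$ lies inside a single part of $Z$ and a single part of $Z'$, hence inside a single part of $ZZ'$. (This is precisely the argument the paper uses to show that $\Dis(S)$ is well defined.) Separately, your appeal to Corollary~\ref{InteriorMagnitude} and Theorem~\ref{AlternatingDerivatives} for the inequality $\Lint(\content(Z))\leq\Lint(C^\ast)$ does not work: the atoms alternate in sign, so $\Lint$ is not monotone under set inclusion of arbitrary subsets of $\Delta(\Omega)$. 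Your parenthetical ``more direct'' route is the right one --- $\content(Z)\subseteq\content(Z^\ast)$ forces $Z$ to be a coarsening of $Z^\ast$, and coarsening cannot increase entropy --- so keep that and drop the appeal to the sign results.
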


It will perhaps be useful to discuss variables which are contained in arbitrary subsets of $\Delta(\Omega)$. For this purpose, we give the following definitions.

\begin{definition}
Given a subset $R \subseteq \Delta(\Omega)$, we say that $R$ is \textbf{discernible} if it corresponds to the content of any random variable $Z$.

Moreover, given any subset $S\subseteq \Delta(\Omega)$, let $\Dis(S) \subseteq S$ be the largest discernible subset of $S$. We will call this the \textbf{maximally discernible subset of $S$}.
\end{definition}

Note that $\Dis(S)$ is well defined as the trivial random variable is always discernible in $S$, and we also have uniqueness. To see this, note that if two such non-isomorphic variables $Z_1$ and $Z_2$ were to exist, then $\content(Z_1Z_2) \subseteq S$ would be a larger subset, contradicting their maximality.

\begin{remark}
As seen in theorem \ref{THM_gacs_korner}, $\Lint[\Dis(\bigcap_i \content(X_i))] = C_{GK}(X_1;\ldots; X_r)$, the G\'acs-K\"orner common information.
\end{remark}

For an example illustrating this result geometrically, see figure \ref{FIGURE_triangles_intersect}.

\newcommand{\grayopac}{80}
\newcommand{\varlabdist}{0.2}
\newcommand{\chrr}{0.4}

\begin{figure}[!t]
\centering
    \begin{tikzpicture}[line join = round, line cap = round]
    \begin{scope}[shift={(-10,1)}, scale=1]
    \scalebox{1.0}{
        \draw[very thick] (6,0 + \chrr) -- (8,0 + \chrr) -- (8,-2 + \chrr) -- (6, -2 + \chrr) -- cycle;
        \draw[very thick] (7,0 + \chrr) -- (7,-1 + \chrr) -- (6, -1 + \chrr);
        \draw[very thick] (7, -1 + \chrr) -- (7, -2 + \chrr);
        \node at (6.5, -0.5 + \chrr) {1};
        \node at (7.5, -0.5 + \chrr) {2};
        \node at (6.5, -1.5 + \chrr) {3};
        \node at (7.5, -1.5 + \chrr) {4};
        \node at (7, -2.55 - \varlabdist + \chrr) {$X$};

        \begin{scope}[shift={(0, -0.5)}]
        \draw[very thick] (6,-3) -- (8,-3) -- (8,-5) -- (6, -5) -- cycle;
        \draw[very thick] (7,-3) -- (7, -4);
        \draw[very thick] (6, -4) -- (8, -4);
        \node at (6.5, -3.5) {1};
        \node at (7.5, -3.5) {2};
        \node at (6.5, -4.5) {3};
        \node at (7.5, -4.5) {4};
        \node at (7, -5.55 - \varlabdist) {$Y$};
        \end{scope}
        }
    \end{scope}
    \begin{scope}
        \coordinate [label=above:4] (4) at (0,{sqrt(2)},0);
        \coordinate [label=left:3] (3) at ({-.5*sqrt(3)},0,-.5);
        \coordinate [label=below:2] (2) at (0,0,1);
        \coordinate [label=right:1] (1) at ({.4*sqrt(3)},0,-.5);
    
        \begin{scope}
        \draw[dashed, gray!\grayopac] (4)--(2);
        \draw[very thick, black] (1)--(3);
        \draw[very thick, black] (2)--(1)--(4);
        \draw[very thick, black] (3)--(2)--cycle;
        \draw[very thick, black] (3)--(4)--cycle;
        \node at (0, -1.3) {a) $\content(X)$};
        \end{scope}
    \end{scope}
    \begin{scope}[shift={(0,-4)}]
        \coordinate [label=above:4] (4) at (0,{sqrt(2)},0);
        \coordinate [label=left:3] (3) at ({-.5*sqrt(3)},0,-.5);
        \coordinate [label=below:2] (2) at (0,0,1);
        \coordinate [label=right:1] (1) at ({.4*sqrt(3)},0,-.5);
    
        \begin{scope}
        \draw[dashed, gray!\grayopac] (3)--(4);
        \draw[very thick, black] (1)--(3);
        \draw[very thick, black] (2)--(1)--(4)--cycle;
        \draw[very thick, black] (3)--(2)--cycle;
        \node at (0, -1.3) {b) $\content(Y)$};
        \end{scope}
    \end{scope}
    \begin{scope}[shift={(3,0)}]
    
        \coordinate [label=above:4] (4) at (0,{sqrt(2)},0);
        \coordinate [label=left:3] (3) at ({-.5*sqrt(3)},0,-.5);
        \coordinate [label=below:2] (2) at (0,0,1);
        \coordinate [label=right:1] (1) at ({.4*sqrt(3)},0,-.5);
    
        \begin{scope}
        \draw[dashed, gray!\grayopac] (3)--(4)--(2);
        \draw[very thick, black] (1)--(3);
        \draw[very thick, black] (2)--(1)--(4);
        \draw[very thick, black] (3)--(2)--cycle;
        \node at (0, -1.3) {c) $\content(X)\cap \content(Y)$};
        \end{scope}
    \end{scope}
    \begin{scope}[shift={(3, -4)}]
        \coordinate [label=above:4] (4) at (0,{sqrt(2)},0);
        \coordinate [label=left:3] (3) at ({-.5*sqrt(3)},0,-.5);
        \coordinate [label=below:2] (2) at (0,0,1);
        \coordinate [label=right:1] (1) at ({.4*sqrt(3)},0,-.5);
    
        \begin{scope}
        \draw[dashed, gray!\grayopac] (2)--(3)--(4)--cycle;
        \draw[very thick, black] (2)--(1)--(4);
        \draw[very thick, black] (1)--(3);
        \node at (0, -1.3) {d) $\Dis(\content(X)\cap \content(Y))$};
        \end{scope}
    \end{scope}
\end{tikzpicture}
\caption{The 1-dimensional atomic contents of $X$ and $Y$ and their (c) intersection and (d) maximally discernible subset.}
\label{FIGURE_triangles_intersect}
\end{figure}
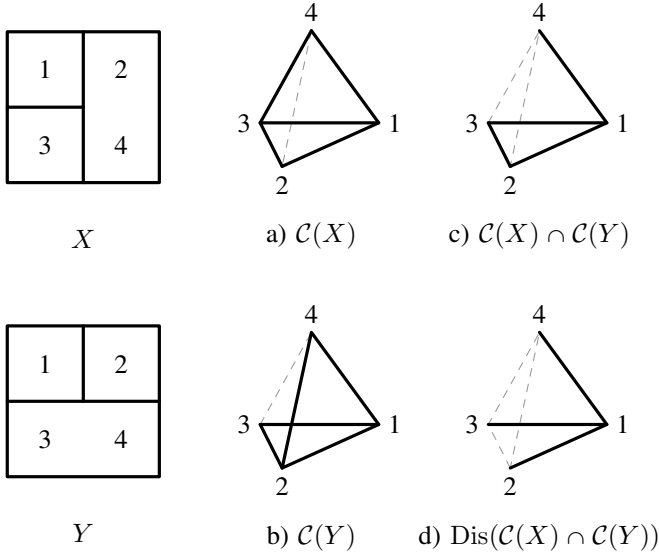

\subsection{Logarithmically decomposable quantities}

Having expressed the mutual information and the common information in terms of subsets of $\Delta(\Omega)$, we are inclined to expect that other kinds of information quantities can also be captured by the logarithmic decomposition. To that end, we give two definitions.

\begin{definition}
Given a collection of random variables $\{X_\alpha: \alpha \in A\}$ for some index set $A$ on a common outcome space $\Omega$, we let $\mathcal{A} = \mathcal{P}(A)$ be the powerset of $A$, and we define a \textbf{variable quantity} to be any map $f: \mathcal{A} \to \mathbb{C}$.
\end{definition}

This notion of variable quantity encapsulates any kind of computation on  any set of random variables and hence certainly contains information quantities. Note that we might allow $f$ to only be defined for finite or countable domains if necessary.

\begin{definition}
Let $f: \mathcal{A} \to \mathbb{C}$ be a variable quantity defined on an outcome space $\Omega$. We say that $f$ is \textbf{logarithmically decomposable} if there exists a mapping $f^*$, defined whenever $f$ is defined, sending sets of variable contents in $\content \mathcal{A}$ to corresponding subsets of $\Delta \Omega$ such that the diagram commutes.

\begin{equation}
\begin{tikzcd}[column sep = 5, row sep = 25]
\mathcal{A} \arrow[rr, "f"] \arrow[d, "\content", style=swap] &\,& \mathbb{C} \\
\mathcal{CA} \arrow[rr, "f*"] &\,& \mathcal{P}(\Delta\Omega) \arrow[u, "\Lint", style=swap]
\end{tikzcd}
\end{equation}

We denote the set of logarithmically decomposable variable quantities by $\mathcal{L}(A)$ or just $\mathcal{L}$.
\end{definition}

This definition captures the idea that the function $f$ can be evaluated by computing a logarithmic decomposition in some form and then applying the measure $\Lint$.

We have seen that all mutual and co-informations are logarithmically decomposable. Moreover, in the previous subsection we also saw that the G\'acs-K\"orner common information is also logarithmically decomposable.

\section{Conclusion}
In this paper we demonstrated that there exists a signed measure space for information which is finer than the $I$-measure of Yeung \cite{yeung1991new}. To our knowledge, this signed measure space is the finest such space treated in the literature, although we expect it can be rederived from many perspectives.

To complete this refinement we utilised a M\"obius inversion on the lattice of all possible outcomes $\Delta(\Omega)$. To demonstrate that this approach indeed reflects a more fundamental description of entropy, we showed that the \textit{sign} of the logarithmic atoms $\Bint(\omega_1,\ldots, \omega_n)$ is intrinsic to their structure. Our proof was analytic, but we expect a proof using convexity is possible. We demonstrated that this approach can capture both mutual information and common information, and we noted that there is a larger class of information quantities which can be logarithmically decomposed. We expect many standard information quantities are also logarithmically decomposable, but for space we have not treated others.

This work opens a new combinatorical framework for use in studying logarithmically decomposable information quantities. While the work presented here takes a geometric perspective, entropy on an inter-variable scale has been shown to have much homological behaviour \cite{baudot2015homological, vigneaux2017information}, and we hope that our tools might be incorporated into the more general topological study of information in the future.

Lastly, we hope that these atoms might be applied to the problem of \textbf{partial information decomposition} \cite{williams2010nonnegative} and extensions such as $\Phi$ID \cite{mediano2021towards}, as the result of theorem \ref{AlternatingDerivatives} demonstrates a potential new avenue for the characterisation of \textit{synergistic} interactions in complex systems.

\paragraph*{Acknowledgements}The authors would like to thank Dan Bor, Fernando Rosas, Robin Ince and Juho Äijälä for interesting discussions on this work and its future directions.

\IEEEtriggeratref{9}

\clearpage
\appendices
\section*{Proofs for results}

\subsection*{Proof of lemma \ref{LEMMA_InteriorLossIdentity}}
\begin{proof}
To simplify we shall also write $f_k = \begin{pmatrix} n-1 \\ k-1 \end{pmatrix}$. This is the number of subsets $S\subseteq \{p_1,\ldots, p_n\}$ of size $k$ which contain a given $p_i$. As we ask for subsets which already contain $p_i$, this is equivalent to asking how many subsets there are of size $k-1$ in $\{p_1,\ldots, p_n\} \setminus \{p_i\}$.

Taking equation \eqref{InclusionExclusion} and using the definition of the total loss function we have

\begin{multline}
    \Lint(p_1, \ldots, p_n) \\
= \log \left[  \frac{A_n}{\sigma(p_1) \ldots \sigma(p_n)} \cdot \frac{\sigma(p_1)^{f_{n-1}} \ldots \sigma(p_n)^{f_{n-1}}} {A_{n-1}} \cdots \right. \\ 
\left.\cdots \left( \frac{A_1}{\sigma(p_1)^{f_1} \ldots \sigma(p_n)^{f_1}}\right) ^{(-1)^{n-1}} \right] \\
= \sum_{k = 1}^n {(-1)^{n-k}} \log \left[ \frac{A_k}{\sigma(p_1)^{f_k}\ldots \sigma(p_n)^{f_k}} \right] \\
\end{multline}

Notice that $f_1 = 1$ so that the final term in this sequence with $k=1$ is equal to $\log(1) = 0$. Counting the powers of $\sigma(p_i)$ shows that in the final expression the power of $\sigma(p_i)$ will be $f_n - f_{n-1} + f_{n-2} + \cdots \pm f_2$ (as the $k=1$ term is cancelled by $A_1$). It is a standard result that

\begin{align}
\begin{split}
\label{BinomialStandardResult}
\sum_{k=1}^n (-1)^{(n-k)}f_k &= 0 \quad \text{and hence} \\
\sum_{k=2}^n (-1)^{(n-k)}f_k &= (-1)^{n}
\end{split}
\end{align}

Hence in the final expression the power of $\sigma(p_i)$ is $(-1)^n$. Rewriting $\sigma(p_1)\cdots \sigma(p_n)= A_1$ gives us the result of equation \eqref{ClearerExpression}.
\end{proof}

\subsection*{Proof of lemma \ref{InteriorLossAt0}}
\begin{proof}
We are augmenting $p_1,\ldots, p_n$ with the additional argument $x$, where we will allow $x$ to vary. Let us now write

\begin{equation}
B_k = \prod_{\substack{S \subseteq\{p_1,\ldots, p_n, x\} \\ x \in S \\ |S| = k}} \sigma(S).
\end{equation}

Then equation \eqref{ClearerExpression} becomes

\begin{multline}
\label{BkAndAk}
\Lint(p_1,\ldots, p_n, x) = \sum_{k=1}^{n+1} (-1)^{n+1-k} \log(B_k(x))  \\
+  \sum_{k=1}^{n} (-1)^{n+1-k} \log(A_k)
\end{multline}

Here we take $A_k$ to be a product of all terms not containing the argument $x$ as per lemma \ref{InteriorLossAt0}. We notice that the sign of all terms $A_k$ have now flipped, but are otherwise identical. We want to show that as $x\to 0$ that these two sums will cancel. Recall that $B_k(x)$ is a product of terms of the form $\sigma(p_1,\ldots, p_n, x) = (p_1 + \cdots + p_n+ x)^{(p_1 + \ldots + p_n + x)}$ for subsets of size $k$. We see that

\begin{equation}
\lim_{x\to 0} \sigma(p_1,\ldots, p_n, x) = \sigma(p_1,\ldots, p_n)
\end{equation}

By the product and quotient rules for limits, we hence also have that

\begin{equation}
\lim_{x\to 0} B_k = A_{k-1}
\end{equation}

Inserting this into equation \eqref{BkAndAk} we see that both sides immediately cancel to give zero as $x\to 0$.
\end{proof}

\subsection*{Proof of lemma \ref{LintAtInfinity}}

\begin{proof}
Using the expression of lemma \ref{LEMMA_InteriorLossIdentity} and the notation for $B_k(x)$ from lemma \ref{InteriorLossAt0} we can write

\begin{multline}
\Lint(p_1,\ldots, p_{n-1}, x) = \sum_{k=1}^n (-1)^{n-k}\log (B_k(x)) + \\
\sum_{k=1}^{n-1} (-1)^{n-k}\log (A_k)
\end{multline}

Where we have omitted the term in $A_n$ because any subset of $\{p_1,\ldots, p_{n-1}, x\}$ of size $n$ is certain to contain $x$. We immediately see that the second expression is equal to $-\Lint(p_1,\ldots, p_{n-1})$. It therefore suffices to show that the first expression in the $B_k(x)$ tends to 0 as $x\to \infty$

Writing the logarithm of $B_k(x)$ as a single fraction, we know by the standard binomial result in equation \eqref{BinomialStandardResult} that the number of factors on the top and the bottom of the fraction containing $x$ is equal. Let the number of factors be $m$. Then, expanding the expression in $B_k(x)$, we see it is dominated on the top and the bottom by an $x^m$ term. This term will dominate as $x\to \infty$, so that the fraction tends to 1 and the logarithm in $x$ will tend to 0, leaving us with

\begin{equation}
\lim_{x \to \infty} \Lint(p_1,\ldots, p_{n-1}, x) = -\Lint(p_1,\ldots, p_{n-1}),
\end{equation}

giving the result immediately.
\end{proof}

\subsection*{Proof of theorem \ref{AlternatingDerivatives}}

\begin{proof}
We will prove this by induction on $n$. To start, we demonstrate that the derivative of $\Lint$ has some useful properties. Using standard results and utilising the notation of lemma \ref{LEMMA_InteriorLossIdentity}, we have that

\begin{multline}
\frac{\partial}{\partial x} \, \sigma(x, p_2,\ldots, p_{k}) = \sigma(x, p_2,\ldots,p_{k}) \\ \cdot\left[ \log(x + p_2+\cdots+p_{k}) + 1 \right]
\end{multline}

We restate the identity in equation \eqref{BkAndAk} for $n-1$ fixed probabilities:

\begin{multline*}
\Lint(p_1,\ldots, p_{n-1}, x) = \sum_{k=1}^{n} (-1)^{n-k} \log(B_k(x))  \\
+ \sum_{k=1}^{n-1} (-1)^{n-k} \log(A_k)
\end{multline*}

The second sum does not depend on $x$. Differentiating with respect to $x$ we obtain

\begin{align}
\begin{split}
&\frac{\partial \Lint}{\partial x}(p_1,\ldots, p_{n-1}, x) \\
=& \sum_{\substack{S \subseteq \{p_1,\ldots, p_{n-1}, x\} \\ x\in S}} (-1)^{n-|S|}\frac{\partial}{\partial x} \, \log(\sigma(S))\\
=& \sum_{\substack{S \subseteq \{p_1,\ldots, p_{n-1}, x\} \\ x\in S}} (-1)^{n-|S|} \frac{\sigma'(S)}{\sigma(S)} \\
=& \sum_{\substack{S \subseteq \{p_1,\ldots, p_{n-1}, x\} \\ x\in S}} (-1)^{n-|S|} \left[ \log\left( \sum_{s \in S} s \right) + 1 \right] \\
\end{split}
\end{align}

The total number of subsets $S\subseteq \{p_1,\ldots, p_{n-1}\}$ of size $k$ is $\begin{pmatrix} n-1 \\ k \end{pmatrix}$, so by the standard result in equation \eqref{BinomialStandardResult} the $+1$ terms will cancel leaving only an alternating sum of logarithms.

To simplify we shall write

\begin{equation}
E_n(x) = \left( (-1)^{n}\LintD(x, p_2,\ldots, p_{n}) \right)
\end{equation}

for $n\in \N$. Doing this gives us a sequence $(E_n(x))_{n\in \N}$ removes the alternating factor $(-1)^n$, allowing us to focus on the alternating sign over $m$.

For example

\[
E_3(x) = \log \frac{(p_1+x)(p_2+x)}{(p_1+p_2+x)(x)}.
\]

Note that all of the even subsets will now appear on the top of the fraction and the odd subsets will appear on the bottom. 

For the first case with $n=2$ we have
\begin{align}
\begin{split}
\frac{\partial \Lint}{\partial x} (x, p_2) &= E_2(x) \\
&= \log \frac{x+p_2}{x}
\end{split}
\end{align}
which is clearly greater than 0 for all $x \in \R^+$. The successive derivatives of $E_2(x)$ will continue to alternate in sign for $x\in \R^+$ using the standard power rule.

As we also know that $\Lint(x, p_2) = L(x, p_2) > 0$, the result holds for $n=2$. We now suppose that the statement is true for $n-1$.

We notice that

\begin{equation}
E_{n}(x) = E_{n-1}(x) - E_{n-1}(x+p_{n})
\end{equation}

Hence

\begin{align}
\label{EQN_difference_of_derivatives}
\begin{split}
& (-1)^n \frac{\partial^m \Lint}{\partial x^m} ( x, p_2, \ldots, p_n) \\
&=  \frac{\partial^{m-1}}{\partial x^{m-1}} E_n(x, p_2,\ldots, p_n) \\
&= \frac{\partial^{m-1}}{\partial x^{m-1}} E_{n-1}(x) - \frac{\partial^{m-1}}{\partial x^{m-1}} E_{n-1} (x + p_{n})
\end{split}
\end{align}

However by assumption we have that

\[
(-1)^{m-2}\frac{\partial^{m-2}}{\partial x^{m-2}} E_{n-1}(x) > 0
\]

Hence as the $m-2$-th partial derivative of $E_{n-1}$ has a given sign, we have that the difference between the terms of equation \eqref{EQN_difference_of_derivatives} has the opposite sign. That is,

\begin{equation}
(-1)^{m-1}\frac{\partial^{m-1}}{\partial x^{m-1}} E_n(x, p_2,\ldots, p_n) > 0
\end{equation}

Now, using lemma \ref{InteriorLossAt0} characterizing the interior loss at 0, and using that $E_n$ is strictly positive (negative) for all $x\in \R^+$, the sign of $\Lint$ will be strictly negative (positive) for $x\in \R^+$. Hence we have

\begin{equation}
(-1)^n (-1)^m \frac{\partial^{m} \Lint}{\partial x^{m}}(x, p_2\ldots, p_n) > 0.
\end{equation}

This completes the inductive argument.

\end{proof}

\subsection*{Proof of corollary \ref{InteriorMagnitude}}

\begin{proof}
We saw in lemma \ref{InteriorLossAt0} that it is sensible to extend $\Lint$ to $\R^+\cup \{0\}$ with $\Lint(p_1,\ldots, p_n) = 0$ when any $p_i = 0$. Moreover, as $\Lint$ is continuous as a function of $\tau$, varies strictly monotonically by lemma \ref{AlternatingDerivatives}, and is bounded at infinity by lemma \ref{LintAtInfinity}, we must have that $|\Lint(p_1,\ldots, p_{n-1}, \tau)| \in [0,|\Lint(p_1,\ldots, p_{n-1})|)$.
\end{proof}

\subsection*{Proof of theorem \ref{THM_yeung_correspondence}}
We first state a small lemma which is a standard property of entropy. We will make use of it to demonstrate that our measure is consistent with Yeung's $I$-measure.

\begin{lemma}
\label{LEMMA_partition_law}
Let $P_1,\ldots, P_k$ be disjoint subsets forming a partition of $\Omega$ consisting of individual outcomes $\omega$ of probability $p_\omega$. Then

\begin{equation}
L\left( \sum_{\omega \in P_1} p_\omega, \ldots, \sum_{\omega \in P_k} p_\omega  \right) = L(\Omega) - \sum_{i = 1}^k L(P_i).
\end{equation}

In particular, the expression of the left-hand side is equal to the measure of the subset $\Delta(\Omega) \setminus \left( \bigcup_{i=1}^k B(P_k) \right).$ 
\end{lemma}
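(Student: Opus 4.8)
The plan is to prove the numerical identity first and then read off the geometric ``in particular'' statement from the Möbius relation of Remark~\ref{InclusionExclusionIdentities}.

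For the first part, write $q_i = \sum_{\omega \in P_i} p_\omega$, so that $\sum_{i=1}^k q_i = 1$ and, by \eqref{EqnEntropyEqualToTotalLoss}, the left-hand side is the Shannon entropy $H(Y)$ of the coarsened variable $Y$ with partition $\{P_1,\ldots,P_k\}$. I would then expand everything via the closed form of $L$ from Definition~\ref{DEFINITIONTotalLoss}. Since the $p_\omega$ sum to $1$ we have $L(\Omega) = -\sum_{\omega\in\Omega} p_\omega \log p_\omega$, while for each block $L(P_i) = q_i \log q_i - \sum_{\omega \in P_i} p_\omega \log p_\omega$. Summing the latter over $i$, the singleton contributions telescope into $\sum_{\omega\in\Omega} p_\omega\log p_\omega$, so $L(\Omega) - \sum_i L(P_i) = -\sum_i q_i \log q_i = L(q_1,\ldots,q_k)$, which is exactly the claim. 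Equivalently, this is just the grouping axiom $H(X) = H(Y) + \sum_i q_i H(X\mid Y=i)$ for Shannon entropy, after noticing that $q_i H(X\mid Y=i) = L(P_i)$.

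For the ``in particular'' statement, recall from Remark~\ref{InclusionExclusionIdentities} that $L(\omega_1,\ldots,\omega_m) = \sum_{S\subseteq\{\omega_1,\ldots,\omega_m\}}\Lint(S)$; applying this with the full outcome set gives $L(\Omega) = \Lint(\Delta(\Omega))$, and applying it to a block $P_i$ gives $L(P_i) = \Lint(B(P_i))$, where $B(P_i)$ is the closed simplex on $P_i$. The closed simplices $B(P_1),\ldots,B(P_k)$ are pairwise disjoint subsets of $\Delta(\Omega)$: a face $\Bint(S)$ lies in $B(P_i)$ exactly when $S\subseteq P_i$, and since the $P_i$ partition $\Omega$ a nonempty $S$ can sit inside at most one block. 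Hence, using finite additivity of the signed measure $\Lint$ (Theorem~\ref{LintSignedMeasure}),
$$\Lint\Bigl(\Delta(\Omega)\setminus\bigcup_{i=1}^k B(P_i)\Bigr) = \Lint(\Delta(\Omega)) - \sum_{i=1}^k \Lint(B(P_i)) = L(\Omega) - \sum_{i=1}^k L(P_i),$$
which by the first part equals $L(q_1,\ldots,q_k)$, the left-hand side.

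There is no genuinely hard step here; the whole argument is bookkeeping around the closed-form expression for $L$ and the Möbius relation. The only points that need care are: (i) checking the disjointness of the closed simplices $B(P_i)$ so that additivity applies cleanly; and (ii) remembering that the $0$-dimensional points $B(\omega)$ carry $\Lint$-measure zero, so there is no ambiguity about whether the singleton faces are counted towards $\Delta(\Omega)$ or towards the $B(P_i)$.
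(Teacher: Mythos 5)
Your proof is correct, but it takes a different route from the paper's on the main identity, and it goes further on the ``in particular'' clause. For the numerical identity the paper first proves the two-element merge rule $L(p_1+p_2,p_3,\ldots,p_n) = L(p_1,\ldots,p_n) - L(p_1,p_2)$ by realising the merge as a coarsened random variable, and then obtains the general statement by symmetry and induction, ``sequentially decomposing sums into pairs.'' You instead expand $L(\Omega)$ and each $L(P_i)$ via the closed form from Definition~\ref{DEFINITIONTotalLoss} and let the singleton terms telescope; this is a one-line computation that avoids the induction entirely, and your observation that the identity is just the grouping axiom $q_i H(X\mid Y=i) = L(P_i)$ is a nice conceptual gloss the paper does not make. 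Where your write-up adds genuine value is the ``in particular'' statement: the paper's proof only establishes the numerical identity and leaves the geometric reading implicit, whereas you derive it explicitly from Remark~\ref{InclusionExclusionIdentities} (so that $L(\Omega) = \Lint(\Delta(\Omega))$ and $L(P_i) = \Lint(B(P_i))$), check that the closed simplices $B(P_i)$ are pairwise disjoint because a nonempty vertex set lies in at most one block, and invoke additivity of the signed measure from Theorem~\ref{LintSignedMeasure}. Your remark that the zero-dimensional faces carry $\Lint$-measure zero, so their allocation is immaterial, correctly disposes of the only ambiguity. Both arguments are sound; yours is the more self-contained of the two.
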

\begin{proof}
We first demonstrate the simple identity
\begin{equation}
L(p_1+p_2,p_3,\ldots, p_n) = L(p_1,p_2,\ldots, p_n) - L(p_1, p_2).
\end{equation}
Let $\Omega = \{\omega_1,\ldots, \omega_N\}$. Then let $X$ be the random variable with partition $\{\{\omega_1,\omega_2\}, \{\omega_3\},\ldots, \{\omega_N\}\}$. By definition we have

\enlargethispage{-3cm} 

\begin{multline}
L(p_1, p_2) = H(\Omega)-H(X) \\
= L(p_1,\ldots, p_n) - L(p_1+p_2,\ldots, p_n),
\end{multline}

giving the identity. The full result then follows by symmetry on the arguments of $L$ and an inductive argument, sequentially decomposing sums into pairs.
\end{proof}

This result essentially states that the total loss of a certain variable defined by the partition $\{P_1,\ldots, P_k\}$ can be computed by calculating the total loss of the entire outcome space and subtracting boundaries internal to parts $P_i$.

We now proceed with the proof of the theorem.

\begin{proof}
We will show that our definition of content agrees with i.) the entropy of individual variables and ii.) the mutual information between two variables. The case for $n$ variables follows inductively.

We will now show that for a variable $X$ with an event space with associated probabilities $p_1,\ldots, p_n$, that $H(X) = L(p_1,\ldots, p_n) = \Lint(\content(X))$, the measure of the content in $X$ (see equation \eqref{EqnEntropyEqualToTotalLoss}).

Inside of a possibly more refined partition given by outcomes in $\Omega$, we can compute the entropy of $X$ by treating it as a partition $P_1,\ldots, P_k$ of the entire outcome space. In this case it is equivalent to the expression in lemma \ref{LEMMA_partition_law}. As mentioned after the lemma, this corresponds to the measure of the set

\begin{equation}
\Delta(\Omega) \, \setminus \, \left( \, \bigcup_{i=1}^k \, \left\{\Bint(S): S \subseteq P_i\right\} \right) = \content(X).
\end{equation}
It can be seen that this is equivalent to the construction of $\content(X)$ in definition \ref{DEFINITION_content}, as the only elements remaining in $\Delta(\Omega)$ must contain outcomes spanning across partitions. This completes i.).

The mutual information between two variables $X,Y$ is given by
\begin{equation}
\label{Eqn_standard_mutual_information}
I(X;Y) = H(X) + H(Y) - H(X,Y)
\end{equation}
We have seen that $H(V) = \Lint(\content(V))$ for a random variable $V$ inside of a refined space $\Omega$. Given two partitions $P$ and $Q$ corresponding to $X$ and $Y$ respectively, the collection generated by their intersections, $P_i\cap P_j$, is also a partition of $\Omega$, corresponding to the joint random variable $(X,Y)$. This is a refinement of the partitions of $X$ and $Y$.

In particular we have that $b\in \content(X)$ implies $b\in \content(X,Y)$. Constructing a formal sum of elements $b\in \content(X,Y)$, we can extend the measure $\Lint$ onto this formal sum to obtain

\begin{equation}
I(X;Y) = \Lint( \content(X) + \content(Y) - \content(X,Y) ) = \Lint(I)
\end{equation}

Where the formal sum $I = \content(X) + \content(Y) - \content(X,Y)$ will reflect the mutual information. We see that an atom $b\in \content(X,Y)$ does not appear in the formal sum $I$ unless $b \in \content(X) \cap \content(Y)$, in which case it appears with coefficient $1$. As all terms in the formal sum have coefficient $1$ or $0$, this formal sum also corresponds to the set of atoms in $\content(X)\cap \content(Y)$. Hence

\begin{equation}
I(X;Y) = \Lint(\content(X) \cap \content(Y)).
\end{equation}

That is, our logarithmic decomposition is consistent with standard Shannon mutual information and, by extension, all higher co-informations. It is hence a refinement of the $I$-measure of Yeung \cite{yeung1991new}.
\end{proof}

\subsection*{Proof of theorem \ref{THM_gacs_korner}}
\begin{proof}
The common information variable $Z$ is unique up to isomorphism, so it suffices to demonstrate that this variable $Z$ has its content $\content(Z) \subseteq \bigcap_i \content(X_i)$.

Given an outcome $\omega \in \Omega$, let $\omega$ be contained in the event $X_i(\omega)$ in $X_i$. That is, $\omega$ is contained in one of the parts $X_i(\omega)$ in the partition of $X_i$. By virtue of the definition of the common information, we must have

\begin{equation}
\label{EQN_function_requirements}
f_i(X_i(\omega)) = f_j(X_j(\omega)) \text{ for all $i,j\in\{1,\ldots, n\}$}.
\end{equation}

We will now show the result in two steps. Firstly we show that the common information variable induces a content in $\Delta(\Omega)$. Then we show that this is contained in the intersection $C$.

Viewing the random variables as partitions of $\Omega$ and using the ordering $A \leq B$ if $A$ is coarser than $B$, we obtain a lattice. Using the restriction in equation \eqref{EQN_function_requirements}, we can see that to compute the partition of $Z$ we must take the meet $X_1\land \cdots \land X_r$ of all variable partitions $X_i$ in the lattice. In particular, the partition of $Z$ has the property that $Z \leq \Omega$, and hence $\content(Z) \subseteq \Delta(\Omega)$, that is, we have the atoms needed to describe $Z$ in $\Delta(\Omega)$. Note that $\content(Z)$ might be empty, in which case it corresponds to the trivial random variable.

To show that $\content(Z)$ is contained in the intersection $C = \bigcap_{i} \content(X_i)$, let $\Bint(S) \in \content(Z)$. By definition, $S$ crosses a boundary in $Z$. As $Z$ is the finest partition which is coarser than $X_1,\ldots, X_r$, $S$ must cross a boundary in all $X_i$. That is, $\Bint(S) \in \bigcap_i \content(X_i)$. Hence $\content(Z) \subseteq C$.

Note that as the partition of $Z$ is unique, the content is also necessarily unique, giving the result.
\end{proof}


\begin{thebibliography}{17}

\bibitem{baez2011characterization}
John~C Baez, Tobias Fritz, and Tom Leinster.
\newblock A characterization of entropy in terms of information loss.
\newblock {\em Entropy}, 13(11):1945--1957, 2011.

\bibitem{baudot2015homological}
Pierre Baudot and Daniel Bennequin.
\newblock The homological nature of entropy.
\newblock {\em Entropy}, 17(5):3253--3318, 2015.

\bibitem{campbell1965entropy}
L~Campbell.
\newblock Entropy as a measure.
\newblock {\em IEEE Transactions on Information Theory}, 11(1):112--114, 1965.

\bibitem{cover1991elements}
T.~M. Cover and J.~A. Thomas, \emph{Elements of Information Theory}, 1991.

\bibitem{gacs1973common}
Peter G{\'a}cs and J{\'a}nos K{\"o}rner.
\newblock Common information is far less than mutual information.
\newblock {\em Problems of Control and Information Theory}, 2(2):149--162,
  1973.

\bibitem{mediano2021towards}
Pedro~AM Mediano, Fernando~E Rosas, Andrea~I Luppi, Robin~L Carhart-Harris,
  Daniel Bor, Anil~K Seth, and Adam~B Barrett.
\newblock Towards an extended taxonomy of information dynamics via integrated
  information decomposition.
\newblock {\em arXiv preprint arXiv:2109.13186}, 2021.

\bibitem{rosas2019quantifying}
Fernando~E Rosas, Pedro~AM Mediano, Michael Gastpar, and Henrik~J Jensen.
\newblock Quantifying high-order interdependencies via multivariate extensions
  of the mutual information.
\newblock {\em Physical Review E}, 100(3):032305, 2019.

\bibitem{te1978nonnegative}
Han Te~Sun.
\newblock Nonnegative entropy measures of multivariate symmetric correlations.
\newblock {\em Information and Control}, 36:133--156, 1978.

\bibitem{ting1962amount}
Hu~Kuo Ting.
\newblock On the amount of information.
\newblock {\em Theory of Probability \& Its Applications}, 7(4):439--447, 1962.

\bibitem{vigneaux2017information}
Juan~Pablo Vigneaux.
\newblock Information structures and their cohomology.
\newblock {\em arXiv preprint arXiv:1709.07807}, 2017.

\bibitem{watanabe1960information}
Satosi Watanabe.
\newblock Information theoretical analysis of multivariate correlation.
\newblock {\em IBM Journal of Research and Development}, 4(1):66--82, 1960.

\bibitem{willard2012general}
Stephen Willard.
\newblock {\em General topology}.
\newblock Courier Corporation, 2012.

\bibitem{williams2010nonnegative}
Paul~L Williams and Randall~D Beer.
\newblock Nonnegative decomposition of multivariate information.
\newblock {\em arXiv preprint arXiv:1004.2515}, 2010.

\bibitem{yeung1991new}
Raymond~W Yeung.
\newblock A new outlook on shannon's information measures.
\newblock {\em IEEE Transactions on Information Theory}, 37(3):466--474, 1991.

\bibitem{yu2016generalized}
Lei Yu, Houqiang Li, and Chang~Wen Chen.
\newblock Generalized common informations: Measuring commonness by the
  conditional maximal correlation.
\newblock {\em arXiv preprint arXiv:1610.09289}, 2016.

\bibitem{bell2003co}
A.~J. Bell, ``The co-information lattice,'' in \emph{Proceedings of the Fifth
  International Workshop on Independent Component Analysis and Blind Signal
  Separation: ICA}, vol. 2003, 2003.

\bibitem{mcgill1954multivariate}
W.~McGill, ``Multivariate information transmission,'' \emph{Transactions of the
  IRE Professional Group on Information Theory}, vol.~4, no.~4, pp. 93--111,
  1954.

\end{thebibliography}
\end{document}